\newcommand{\pdl}{\logic{PDL}}
\newcommand{\dlpa}{\logic{DL\text{-}PA}}
\newcommand{\branch}{b}
\newcommand{\Branches}{B}
\newcommand{\ecl}{\cl^+}
\newcommand{\cl}{\operatorname{cl}}
\newcommand{\clp}{\cl^{\Box}}
\newcommand{\Tableau}{T}
\newcommand{\Function}[1]{\textbf{function}~#1}
\newcommand{\Input}[1]{\textbf{input:}~#1}
\newcommand{\Output}[1]{\textbf{output:}~#1}
\newcommand{\Begin}{\textbf{begin}}
\newcommand{\End}{\textbf{end}}
\newcommand{\Return}[1]{\textbf{return}~#1}
\newcommand{\If}[1]{\textbf{if}~#1~\textbf{then}}
\newcommand{\ElseIf}[1]{\textbf{else if}~#1~\textbf{then}}
\newcommand{\Else}{\textbf{else}}
\newcommand{\EndIf}{\textbf{end if}}
\newcommand{\While}[1]{\textbf{while}~#1~\textbf{do}}
\newcommand{\EndWhile}{\textbf{end while}}
\newcommand{\ForEach}[1]{\textbf{for each}~#1~\textbf{do}}
\newcommand{\EndFor}{\textbf{end for}}
\newcommand{\True}{\textbf{true}}
\newcommand{\False}{\textbf{false}}
\newcommand{\Acts}{\mathbb{A}}
\newcommand{\act}{a}
\newcommand{\card}{\operatorname{card}}
\newcommand{\dom}{\operatorname{dom}}
\newcommand{\emptyseq}{()}
\newcommand{\exto}[1]{|| #1 ||}			
\newcommand{\lang}{\mathcal{L}}
\newcommand{\lbox}[1]{[#1]}
\newcommand{\ldia}[1]{\langle #1 \rangle}
\newcommand{\len}{\operatorname{len}}
\newcommand{\lequiv}{\leftrightarrow}
\newcommand{\limp}{\rightarrow}
\newcommand{\logic}[1]{\ensuremath{\mathrm{#1}}\xspace}
\newcommand{\Nats}{\mathbb{N}}
\newcommand{\ndet}{\cup}						
\newcommand{\prop}{p}
\newcommand{\propb}{q}
\newcommand{\Props}{\mathbb{P}}
\newcommand{\seq}{\operatorname{exe}}
\newcommand{\set}[1]{\{ #1 \}}
\newcommand{\tuple}[1]{\langle #1 \rangle}
\newcommand{\tr}{\operatorname{tr}}
\newcommand{\Vals}{V}
\renewcommand{\phi}{\varphi}
\newcounter{warning}
\newcommand{\warning}[1]{}
\newcommand{\onlylong}[1]{#1}
\newcommand{\onlyshort}[1]{}
\newenvironment{relemma}[2]{\par\noindent\textbf{Lemma~#1~(#2).~~}}{\par\par}
\newenvironment{retheorem}[2]{\par\noindent\textbf{Theorem~#1~(#2).~~}}{\par\par}
\begin{document}
\title{Tableaux for \\ Dynamic Logic of Propositional Assignments}
\titlerunning{Tableaux for $\dlpa$}  
%
\author{%
Tiago de Lima\inst{1}
\and
Andreas Herzig\inst{2}
}
\authorrunning{T. de Lima and A. Herzig} 
%
%
\institute{%
CRIL, Univ. of Artois and CNRS,
Rue Jean Souvraz, SP 18,
62307 Lens Cedex,
France
\and
IRIT, Univ. of Toulouse and CNRS,
118, route de Narbonne,
31062 Toulouse Cedex 9,
France
}
\maketitle
\pagestyle{plain}
\begin{abstract}
The Dynamic Logic for Propositional Assignments ($\dlpa$) has recently been studied as an alternative to Propositional Dynamic Logic ($\pdl$).
In $\dlpa$, the abstract atomic programs of $\pdl$ are replaced by assignments of propositional variables to truth values.
This makes $\dlpa$ enjoy some interesting meta-logical properties that $\pdl$ does not, such as eliminability of the Kleene star, compactness and interpolation.
We define and analytic tableaux calculus for $\dlpa$ and show that it matches the known complexity results.
\bigskip\\
\textbf{Keywords:}
Modal Logic;
Propositional Dynamic Logic;
Dynamic Logic of Propositional Assignments;
analytic tableaux.
\end{abstract}
%
\section{Introduction}
\label{sec:intro}
%
Dynamic Logic of Propositional Assignments, abbreviated $\dlpa$, has recently been studied in
\cite{HerzigLMT-Ijcai11,BalbianiHerzigTroquard-Lics13}
as an interesting alternative to Propositional Dynamic Logic ($\pdl$) \cite{Harel84}.
In a series of papers, it was shown that $\dlpa$ is a useful tool to analyse various kinds of dynamic systems, such as normative systems \cite{HerzigLMT-Ijcai11}, fusion operators \cite{HerzigEtal-Foiks14}, update and revision operators \cite{HerzigEtal-Kr14}, or the evolution of argumentation frameworks \cite{DoutreEtal-Kr14}.
While, in $\pdl$, one can write formulas of the form $\lbox{a}\phi$, meaning ``after every possible execution of the abstract atomic program $\act$, formula $\phi$ is true'', in $\dlpa$, one can write formulas of the form $\lbox{{+}\prop}\phi$, meaning ``after assigning the truth value of $\prop$ to true, formula $\phi$ is true''.
The atomic program ${+}\prop$ is an assignment that maps the propositional variable $\prop$ to true.
In fact, the atomic programs of $\dlpa$ are sets of such assignments, that are viewed as partial functions from the set of propositional variables to $\set{\top,\bot}$.
From these atomic programs, complex programs are built just as in $\pdl$.
For example, one can write in $\dlpa$ the formula $\lbox{\lnot\prop?; {+}\prop}\prop$, which means
``if $\prop$ is false then $\prop$ is true after its truth value be assigned to true''.

While the models of $\pdl$ are transition systems, the models of $\dlpa$ are much smaller: 
valuations of classical propositional logic, i.e., nothing but sets of propositional variables.
Due to that, $\dlpa$ enjoys some interesting meta-logical properties that $\pdl$ lacks, such as eliminability of the Kleene star, compactness and interpolation.

The complexity of the satisfiability problem in $\dlpa$ is the same as in $\pdl$:
it is in EXPTIME for the full language and PSPACE complete for the star-free fragment.
EXPTIME membership of the full language is proved by a polynomial embedding of $\dlpa$ into $\pdl$, and PSPACE membership of the star-free fragment of $\dlpa$ is proved via NPSPACE membership of its model checking problem, exploiting the fact that NPSPACE = PSPACE. 
However, these reductions lead to suboptimal theorem proving methods.
Our aim in this paper is to define tableaux theorem proving procedures for $\dlpa$ that are both direct and more efficient.

The paper is organized as follows.
We start by recalling $\dlpa$ (Section~\ref{sec:dlpa}).
Then, we provide a tableaux method for its star-free fragment (Section~\ref{sec:tableaux_star_free}) and show an algorithm implementing it that works in polynomial space (Section \ref{sec:optimal_star_free}).
After that, we extend the tableaux method to the full language of $\dlpa$ (Section~\ref{sec:tableaux}) and then show an algorithm implementing it and that works in time exponential (Section \ref{sec:optimal}).
Section \ref{sec:conclu} discusses some issues and concludes the paper.\onlyshort{\footnotemark
\footnotetext{%
Proofs of the important results are in the appendix of the long version, which can be found at:\\
\url{https://drive.google.com/file/d/0B2ruhCOtksfObFZDS2V6dDZHSjQ/edit?usp=sharing}.
}
}\onlylong{\footnotemark
\footnotetext{%
Proofs of the important theorems are in the appendix.
}
}
%
\section{Dynamic Logic of Propositional Assignments}
\label{sec:dlpa}
%
\subsection{Syntax}
The vocabulary of $\dlpa$ contains a countable set $\Props$ of propositional variables.
From this set, we build the set $\Acts$ of \emph{propositional assignments}, which are the \emph{atomic programs} of the language.
Each propositional assignment is a non-empty finite partial function from $\Props$ to $\set{\bot, \top}$.\footnotemark
\footnotetext{%
We note that the original language in \cite{BalbianiHerzigTroquard-Lics13} is slightly more restrictive:
$\alpha$ only assigns a single propositional variable.
But, as shown in this paper, it does not change the known decidability and complexity results.
}

The \emph{language $\lang$ of $\dlpa$} is the the set of formulas $\phi$ defined by the BNF:
\begin{align*}
\phi & \Coloneqq \prop \mid \lnot\phi \mid \phi \land \phi \mid \lbox{\pi}\phi\\
\pi & \Coloneqq \alpha \mid \pi; \pi \mid \pi \ndet \pi \mid \pi^\ast \mid \phi?
\end{align*}
where $\prop$ ranges over $\Props$ and $\alpha$ ranges over $\Acts$.

To ease notation and readability of programs, we write ${+}\prop$ for $(\prop, \top)$ and ${-}\prop$ for $(\prop, \bot)$.
Moreover, we sometimes ``forget'' some parentheses and curly braces when writing propositional assignments.
As a result, the formula $\lbox{\set{(\prop, \top), (\propb, \bot)}}\phi$ is rather noted $\lbox{{+}\prop, {-}\propb}\phi$.
In some places, we use the expression $\pm\prop$ to talk economically about ${+}\prop$ and ${-}\prop$ at the same time.

The complex programs of $\dlpa$ are constructed as in Dynamic Propositional Logic ($\pdl$) \cite{HarelKozenTiuryn00}.
As well as in $\pdl$, formulas of the form $\lbox{\pi}\phi$ are read
``after every possible execution of $\pi$, $\phi$ is true''.

We also use the common abbreviations for the connectives $\top$, $\bot$, $\lor$, $\limp$ and $\lequiv$.
The formula $\ldia{\pi} \phi$ abbreviates $\lnot\lbox{\pi}\lnot\phi$. 
The \emph{star-free fragment of $\lang$} is the fragment without the Kleene star operator $^\ast$ and is noted $\lang^{-\ast}$.

The \emph{length} of a formula or a program, given by the function $\len$, is the number of atoms and connectives in the formula or the program. Table~\ref{tab:len} defines it formally.

\begin{table}
\begin{align*}
\begin{aligned}[t]
\len(\prop) & = 1\\
\len(\lnot\phi) & = 1 + \len(\phi)\\
\len(\phi_1 \land \phi_2) & = 1 + \len(\phi_1) + \len(\phi_2)\\
\len(\lbox{\pi}\phi) & = 1 + \len(\pi) + \len(\phi)
\end{aligned}
& \qquad
\begin{aligned}[t]
\len(\alpha) & = |\dom(\alpha)|\\
\len(\pi_1; \pi_2) & = 1 + \len(\pi_1) + \len(\pi_2)\\
\len(\pi_1 \cup \pi_2) & = 1 + \len(\pi_1) + \len(\pi_2)\\
\len(\pi^\ast) & = 1 + \len(\pi)\\
\len(\phi?) & = 1 + \len(\phi)
\end{aligned}
\end{align*}
\caption{%
\label{tab:len}
Length
}
\end{table}

The \emph{closure of $\phi$} is the set $\cl(\phi)$ defined in Table~\ref{tab:cl}.
This is almost the same as the Fisher-Ladner closure \cite{Fischer_Ladner-1979-JCSS}, which is used to show decidability and complexity results for $\pdl$.
But since the atomic programs of $\dlpa$ are sets of assignments, there is a difference here in the definition of $\clp(\lbox{\alpha}\phi)$.
It takes into account the assignments by adding the domain of the atomic program $\alpha$.

\begin{table}
\begin{align*}
\begin{aligned}[t]
\cl(\prop) & = \set{\prop}\\
\cl(\lnot\phi) & = \set{\lnot\phi} \cup \cl(\phi)\\
\cl(\phi_1 \land \phi_2) & = \set{\phi_1 \land \phi_2} \cup \cl(\phi_1) \cup \cl(\phi_2)\\
\cl(\lbox{\pi}\phi) & = \clp(\lbox{\pi}\phi) \cup \cl(\phi)
\end{aligned}
& \qquad
\begin{aligned}[t]
\clp(\lbox{\alpha}\phi) & = \set{\lbox{\alpha}\phi} \cup \dom(\alpha)\\
\clp(\lbox{\pi_1; \pi_2}\phi) & = \set{\lbox{\pi_1; \pi_2}\phi} \cup \clp(\lbox{\pi_1}\lbox{\pi_2}\phi)\\
\clp(\lbox{\pi_1 \ndet \pi_2}\phi) & = \set{\lbox{\pi_1 \ndet \pi_2}\phi} \cup \clp(\lbox{\pi_1}\phi) \cup \clp(\lbox{\pi_2}\phi)\\
\clp(\lbox{\pi^\ast}\phi) & = \set{\lbox{\pi^\ast}\phi} \cup \clp(\lbox{\pi}\lbox{\pi^\ast}\phi)\\
\clp(\lbox{\phi_1?}\phi_2) & = \set{\lbox{\phi_1?}\phi_2} \cup \cl(\phi_1)
\end{aligned}
\end{align*}
\caption{%
\label{tab:cl}
Closure
}
\end{table}

The \emph{extended closure of $\phi$} is the set $\ecl(\phi)$ containing $\cl(\phi)$ and the negations of its formulas, i.e.,
$\ecl(\phi) = \cl(\phi) \cup \set{\lnot\psi : \psi \in \cl(\phi)}$.
To ease notation, we sometimes use $\Props_\phi$ to denote the set of propositional variables occurring in $\phi$,
i.e. $\Props_\phi = \Props \cap \cl(\phi)$.

The lemma below can be proved with an easy induction on the length of formulas and programs.

\begin{lemma}
\label{lem:cl}
~
\begin{enumerate}
\item
$\card(\clp(\lbox{\pi}\phi)) \leq \len(\lbox{\pi}\phi)$
\item
$\card(\cl(\phi)) \leq \len(\phi)$.
\item
$\card(\ecl(\phi)) \leq 2\len(\phi)$.
\end{enumerate}
\end{lemma}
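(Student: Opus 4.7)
My plan is to prove parts~(1) and~(2) by mutual structural induction, since the test case of $\clp$ invokes $\cl$ while the box case of $\cl$ invokes $\clp$; any well-founded measure such as $\len(\pi) + \len(\phi)$ for~(1) and $\len(\phi)$ for~(2) suffices. Once~(2) is established, part~(3) is immediate: since $\ecl(\phi) = \cl(\phi) \cup \set{\lnot\psi : \psi \in \cl(\phi)}$, we obtain $\card(\ecl(\phi)) \leq 2\card(\cl(\phi)) \leq 2\len(\phi)$ by~(2), so part~(3) requires no independent induction.

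For part~(1), I would case-split on the structure of $\pi$. The base case $\pi = \alpha$ is direct: $\clp(\lbox{\alpha}\phi) = \set{\lbox{\alpha}\phi} \cup \dom(\alpha)$ has cardinality at most $1 + \len(\alpha) \leq \len(\lbox{\alpha}\phi)$, using $\len(\alpha) = \card(\dom(\alpha))$ by definition. The cases $\pi_1; \pi_2$, $\pi'^\ast$, and $\psi?$ each unfold $\clp$ as $\set{\lbox{\pi}\phi} \cup S$, where $S$ is a closure computed over a strictly shorter argument; bounding the cardinality of the union by $1 + \card(S)$ and applying the IH, the extra~$+1$ is absorbed by the syntactic connective counted inside $\len(\pi)$, and the test case additionally appeals to~(2) on the strictly shorter subformula $\psi$. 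For part~(2), I would induct on $\phi$: the cases $\prop$, $\lnot\phi$, and $\phi_1 \land \phi_2$ reduce to sums of subclosure cardinalities, and the box case $\lbox{\pi}\phi'$ decomposes as $\cl(\lbox{\pi}\phi') = \clp(\lbox{\pi}\phi') \cup \cl(\phi')$, handled by part~(1) and the IH respectively.

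The main obstacle is the union case $\pi = \pi_1 \ndet \pi_2$ of part~(1): by definition $\clp(\lbox{\pi_1 \ndet \pi_2}\phi) = \set{\lbox{\pi_1 \ndet \pi_2}\phi} \cup \clp(\lbox{\pi_1}\phi) \cup \clp(\lbox{\pi_2}\phi)$, and naively summing two copies of the IH threatens to double-count the contribution of $\phi$ against the single budget $\len(\phi)$ on the right-hand side of the target inequality. The cleanest remedy I foresee is to strengthen the inductive invariant so that the bound on $\card(\clp(\lbox{\pi}\phi))$ depends on $\len(\pi)$ in a way that is decoupled from $\len(\phi)$, and then to exploit the shared tail~$\phi$ between the two branches of the union when bounding the set-theoretic union, so that only the single~$+1$ contributed by the $\ndet$ connective is needed to cover the new root element $\lbox{\pi_1 \ndet \pi_2}\phi$.
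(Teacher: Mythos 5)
You have put your finger on exactly the right difficulty, and the paper gives you nothing to check against: Lemma~\ref{lem:cl} is dismissed there as ``an easy induction'' with no proof, and the $\ndet$ case is precisely where that induction breaks. Two problems with your plan. First, the measure $\len(\pi)+\len(\phi)$ does not decrease along the recursion: $\clp(\lbox{\pi^\ast}\phi)$ calls $\clp(\lbox{\pi}\lbox{\pi^\ast}\phi)$ and $\clp(\lbox{\pi_1;\pi_2}\phi)$ calls $\clp(\lbox{\pi_1}\lbox{\pi_2}\phi)$, whose second arguments are longer; the induction for~(1) must be structural on $\pi$ with the claim universally quantified over the tail formula, which is in effect the ``decoupled'' invariant you gesture at. Second, and fatally, the remedy you sketch for $\ndet$ cannot work: when $\pi_1$ and $\pi_2$ involve disjoint variables, $\clp(\lbox{\pi_1}\phi)$ and $\clp(\lbox{\pi_2}\phi)$ are disjoint sets, because $\phi$ occurs only as a suffix inside syntactically distinct formulas and never as a common element, so there is no saving from the ``shared tail,'' and the best $\phi$-independent bound already costs $1+\len(\alpha)$ at the atoms, so the $+1$ excesses accumulate under nested unions.

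In fact the lemma is false as stated, so no strategy can succeed. Take $\phi_0=\lbox{{+}\prop_1 \ndet {+}\prop_2}\prop_3$: then $\len(\phi_0)=1+(1+1+1)+1=5$, while $\cl(\phi_0)=\set{\phi_0,\ \lbox{{+}\prop_1}\prop_3,\ \prop_1,\ \lbox{{+}\prop_2}\prop_3,\ \prop_2,\ \prop_3}$ has six elements, refuting~(2). Nesting unions refutes~(1): for $\pi=({+}\prop_1\ndet{+}\prop_2)\ndet({+}\prop_3\ndet{+}\prop_4)$ one gets $\card(\clp(\lbox{\pi}\prop_5))=11>9=\len(\lbox{\pi}\prop_5)$. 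What \emph{is} true, by exactly the structural induction you outline, is the uniform invariant $\card(\clp(\lbox{\pi}\psi))\leq 2\len(\pi)$ for every $\psi$ (atoms give $1+\len(\alpha)\leq 2\len(\alpha)$ since assignments are non-empty, and the union case closes because $1+2\len(\pi_1)+2\len(\pi_2)\leq 2\len(\pi_1\ndet\pi_2)$), whence $\card(\cl(\phi))\leq 2\len(\phi)$ and $\card(\ecl(\phi))\leq 4\len(\phi)$. These weaker constants still support every later use of the lemma, in particular Theorems~\ref{theo:complexity_1} and~\ref{theo:complexity_2}, which only need the closure to be linear in $\len(\phi_0)$; the correct move is to repair the constants in the statement, not to seek a cleverer proof of the stated inequalities.
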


Intuitively, the set of \emph{execution traces} of $\pi$ is the set $\seq(\pi)$ of sequences of assignments that corresponds to all possible executions of program $\pi$.
The set $\seq(\phi)$ corresponds to all possible executions of all programs in $\phi$.
These are defined by a mutual recursion, as displayed in Table~\ref{tab:seq}.
We use the symbol `$\emptyseq$' to denote the empty sequence.

\begin{table}
\begin{align*}
\begin{aligned}[t]
\seq(\prop) & = \set{\emptyseq}\\
\seq(\lnot\phi) & = \seq(\phi)\\
\seq(\phi \land \psi) & = \seq(\phi) \cup \seq(\psi)\\
\seq(\lbox{\pi}\phi) & = \seq(\pi) \cup \seq(\phi)
\end{aligned}
& \qquad
\begin{aligned}[t]
\seq(\alpha) &= \set{\alpha}\\
\seq(\pi_1;\pi_2) &= \set{\sigma_1\sigma_2 : \sigma_1 \in \seq(\pi_1), \sigma_2 \in \seq(\pi_2)}\\
\seq(\pi_1 \ndet \pi_2) &= \seq(\pi_1) \cup \seq(\pi_2)\\
\seq(\pi^\ast) &= \bigcup_{n \in \Nats_0}\set{\sigma_1\dots\sigma_n : \sigma_1, \dots, \sigma_n \in \seq(\pi)}\\
\seq(\phi?)  &= \seq(\phi)
\end{aligned}
\end{align*}
\caption{%
\label{tab:seq}
Execution traces
}
\end{table}

The \emph{length of execution traces}, also given by the function $\len$, is just the number of atomic programs in it.
That is:
$\len(\emptyseq) = 0$,
$\len(\alpha) = 1$
and
$\len(\sigma\alpha) = \len(\sigma) + \len(\alpha)$.

The lemma below can also be proved with an easy induction on the length of programs and formulas:

\begin{lemma}
\label{lem:exe}
~
\begin{enumerate}
\item
If $\pi$ does not contain the Kleene star then $\len(\sigma) \leq \len(\pi)$, for all $\sigma \in \seq(\pi)$.
\item
If $\phi \in \lang^{-\ast}$ then $\len(\sigma) \leq \len(\phi)$, for all $\sigma \in \seq(\phi)$.
\end{enumerate}
\end{lemma}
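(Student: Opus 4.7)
The plan is to prove the two statements simultaneously by mutual structural induction on programs $\pi$ (not containing the Kleene star) and formulas $\phi \in \lang^{-\ast}$, using the definitions of $\seq$ and $\len$ from Tables~\ref{tab:seq} and the length table. The mutual character is forced by the test program $\phi?$ in item~1, which refers back to the traces of a formula, and by the modality $\lbox{\pi}\psi$ in item~2, which refers back to the traces of a program. I will treat both clauses in a single induction on the sum $\len(\pi)+\len(\phi)$ (or, equivalently, by simultaneous induction on the program/formula structure), so each inductive call is legitimate.

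For item~1, first I would handle the base case $\pi=\alpha$: here $\seq(\alpha)=\set{\alpha}$, hence $\len(\sigma)=1$, and $\len(\alpha)=|\dom(\alpha)|\geq 1$ since assignments are non-empty by definition. The cases $\pi=\pi_1;\pi_2$ and $\pi=\pi_1\ndet\pi_2$ are routine applications of the induction hypothesis: in the sequential case $\sigma=\sigma_1\sigma_2$ gives $\len(\sigma)=\len(\sigma_1)+\len(\sigma_2)\leq\len(\pi_1)+\len(\pi_2)<\len(\pi_1;\pi_2)$, and in the non-deterministic choice case $\sigma$ lies in $\seq(\pi_i)$ for some $i\in\set{1,2}$ so $\len(\sigma)\leq\len(\pi_i)<\len(\pi_1\ndet\pi_2)$. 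The only case that is not self-contained is the test $\pi=\phi?$, where $\seq(\phi?)=\seq(\phi)$: here I invoke item~2 on $\phi$ (which is strictly smaller than $\phi?$) to conclude $\len(\sigma)\leq\len(\phi)<\len(\phi?)$.

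For item~2, I would again proceed by cases on the structure of $\phi$. The case $\phi=\prop$ is immediate since $\seq(\prop)=\set{\emptyseq}$ and $\len(\emptyseq)=0\leq 1$. The cases $\phi=\lnot\psi$ and $\phi=\phi_1\land\phi_2$ follow directly from the induction hypothesis and the corresponding clauses defining $\seq$. The interesting case is $\phi=\lbox{\pi}\psi$ with $\pi$ star-free (because $\phi\in\lang^{-\ast}$): here $\seq(\lbox{\pi}\psi)=\seq(\pi)\cup\seq(\psi)$, so if $\sigma\in\seq(\pi)$ I apply item~1 to get $\len(\sigma)\leq\len(\pi)<\len(\lbox{\pi}\psi)$, and if $\sigma\in\seq(\psi)$ I apply item~2 to the smaller formula $\psi$.

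There is essentially no hard step; the only subtlety worth flagging explicitly is the need for a truly mutual (rather than sequential) induction, since the test-program clause in item~1 and the box-formula clause in item~2 invoke each other. The non-emptiness condition on assignments, which guarantees $\len(\alpha)\geq 1$ in the base case of item~1, is the one definitional fact worth highlighting in the write-up.
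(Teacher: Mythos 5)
Your proposal is correct and follows exactly the route the paper intends: the paper only remarks that the lemma ``can be proved with an easy induction on the length of programs and formulas'' and gives no further detail, and your mutual induction on $\len(\pi)+\len(\phi)$, including the appeal to non-emptiness of assignments to get $\len(\alpha)=|\dom(\alpha)|\geq 1$ in the base case, is a faithful and complete elaboration of that sketch.
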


Note, however, that each $\sigma \in \seq(\pi^\ast)$ is infinite.
%
%
%
%
%
%
%
\subsection{Semantics}
A \emph{\dlpa model} is a set $\Vals \subseteq \Props$ of propositional variables.
When $p \in \Vals$ then $p$ is true, and 
when $p \notin \Vals$ then $p$ is false.

The interpretation of an assignment $\alpha$ is in terms of a \emph{model update}.
The update of a model $\Vals$ by an assignment $\alpha$ is the new model $\Vals^\alpha$ such that:
\[
\Vals^\alpha =\set{\prop : \Vals \models \alpha(\prop)}
\]
where we suppose that when $\prop$ is not in the domain of $\alpha$ then $\alpha(\prop)$ equals $\prop$.
In particular, for the assignment ${+}\prop$ we have $\Vals^{{+}\prop} = \Vals \cup \set{\prop}$.
Given a sequence of assignments $\alpha_1 \ldots \alpha_n$, for the sake of readability, we sometimes write $\Vals^{\alpha_1 \cdots \alpha_n}$ instead of $( \cdots (\Vals^{\alpha_1})\cdots)^{\alpha_n}$.

Formulas $\phi$ are interpreted as \emph{sets of models} $\exto{\phi}$, while programs $\pi$ are interpreted by means of a (unique) \emph{relation between valuations} $\exto{\pi}$.
Just as in \pdl, the formal definition is by a mutual recursion.
It is given in Table~\ref{table:truthConds}.

\begin{table}
\begin{align*}
\begin{aligned}[t]
\exto{\prop} & = \set{\Vals : \prop \in \Vals}\\
\exto{\lnot\phi} & = 2^\Props \setminus \exto{\phi}\\
\exto{\phi \land \psi} & = \exto{\phi} \cap \exto{\psi}\\
\exto{\lbox{\pi}\phi} & = \set{\Vals : \text{if } \tuple{\Vals, \Vals'} \in \exto{\pi} \text{ then } \Vals' \in \exto{\phi}}\\
\end{aligned}
& \qquad
\begin{aligned}[t]
\exto{\alpha} & = \set{\tuple{\Vals,\Vals'} : \Vals' = \Vals^\alpha}\\
\exto{\pi_1; \pi_2} & = \exto{\pi_1} \circ \exto{\pi_2}\\
\exto{\pi_1 \ndet \pi_2} & = \exto{\pi_1} \cup \exto{\pi_2}\\
\exto{\pi^\ast} & = \bigcup_{n\in \Nats_0} ( \exto{\pi} )^n\\
\exto{\phi?} & = \set{\tuple{\Vals,\Vals} : \Vals \in \exto{\phi}}
\end{aligned}
\end{align*}
\caption{Interpretation of the \dlpa connectives}
\label{table:truthConds}
\end{table}

As usual, we also write $\Vals \models \phi$ to mean that $\Vals \in \exto{\phi}$.
Moreover, given a formula $\phi$, 
we say that $\phi$ is \emph{\dlpa valid} (noted $\models\phi$)
if and only if $\exto{\phi} = 2^\Props$,
and
we say that $\phi$ is \emph{\dlpa satisfiable}
if and only if $\exto{\phi} \neq \emptyset$.

For example, the formulas
$\lbox{{+}\prop}\top$,
$\lbox{{+}\prop}\phi \lequiv \lnot\lbox{{+}\prop}\lnot\phi$,
$\lbox{\pi}\top$,
$\lbox{{+}\prop}\prop$ and
$\lbox{{-}\prop}\lnot\prop$
are all \dlpa valid.
\subsection{Existing Proof Methods}
We now recall the existing methods for both model checking and satisfiability checking in $\dlpa$.
They either use a non-elementary reduction to propositional logic or a quadratic embedding into $\pdl$.
We then provide a linear reduction of satisfiability checking to model checking.
This justifies our focus on a tableaux method for model checking in the rest of the paper.
But first, let us recall some valid principles in $\dlpa$.
\begin{proposition}[\cite{HerzigLMT-Ijcai11}]
\label{pro:principles}
The following principles are valid in \dlpa:
\begin{enumerate}
\item
$\lbox{\alpha}\prop \lequiv \alpha(\prop)$
\item
$\lbox{\psi?}\phi \lequiv (\psi \limp \phi)$
\item
$\lbox{\pi}\lnot\phi \lequiv \lnot\lbox{\pi}\phi$
\item
$\lbox{\pi}(\phi \land \psi) \lequiv (\lbox{\pi}\phi \land \lbox{\pi}\psi)$
\item
$\lbox{\pi_1; \pi_2}\phi \lequiv \lbox{\pi_1}\lbox{\pi_2}\phi$
\item
$\lbox{\pi_1 \ndet \pi_2}\phi \lequiv (\lbox{\pi_1}\phi \land \lbox{\pi_2}\phi)$
\item
$\lbox{\pi^\ast}\phi \lequiv (\phi \land \lbox{\pi}\lbox{\pi^\ast}\phi)$
\item
From $\psi \limp (\phi \land \lbox{\pi^\ast}\psi)$ infer $\psi \limp \lbox{\pi^\ast}\phi$
\end{enumerate}
\end{proposition}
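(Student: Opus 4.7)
My plan is to verify items 1--7 directly by unfolding the semantics in Table~\ref{table:truthConds} on both sides, and to establish the Kleene-style induction rule 8 by a short soundness argument. Items 4, 5, and 6 are the most routine: $\lbox{\pi}$ is a universal quantifier over $\exto{\pi}$-successors, which commutes with conjunction, decomposes with relational composition, and distributes over relational union as conjunction. Item 7 is the fixed-point unfolding, read off from $\exto{\pi^\ast} = \bigcup_{n \in \Nats_0} (\exto{\pi})^n = \mathrm{Id} \cup \exto{\pi}\circ\exto{\pi^\ast}$, combined with item 5 applied to $\pi;\pi^\ast$.

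For the base-case items 1 and 2 I would unfold the relevant semantic clauses. Item 1 reduces to $V \models \lbox{\alpha}\prop$ iff $\prop \in V^\alpha$ iff $V \models \alpha(\prop)$; here the convention that $\alpha(\prop) = \prop$ for $\prop \notin \dom(\alpha)$ is precisely what makes the equivalence uniform across variables both in and out of the assignment's domain. Item 2 uses $\exto{\psi?} = \set{\tuple{V,V} : V \models \psi}$, so $\lbox{\psi?}\phi$ holds at $V$ exactly when $V \models \psi$ implies $V \models \phi$, which is $\psi \limp \phi$. Item 3 requires more care, since it states that $\lbox{\pi}$ commutes with negation; I would proceed by structural induction on $\pi$, using properties specific to how DL-PA programs act on valuations (in particular, the way atomic assignments determine their successor) rather than invoking a generic PDL principle.

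The main obstacle is item 8, the Kleene induction rule, but once one has item 7 it actually requires no further induction on iteration count. Suppose $\psi \limp (\phi \land \lbox{\pi^\ast}\psi)$ is valid, and fix an arbitrary $V$ with $V \models \psi$. The hypothesis gives $V \models \lbox{\pi^\ast}\psi$, so every $V'$ with $\tuple{V,V'} \in \exto{\pi^\ast}$ satisfies $\psi$; a second application of the hypothesis at each such $V'$ then yields $V' \models \phi$, and therefore $V \models \lbox{\pi^\ast}\phi$. The soundness of the rule comes down to the fact that $\psi$ propagates along $\pi^\ast$ as an invariant, which is exactly the content of the premise $\psi \limp \lbox{\pi^\ast}\psi$ extracted from the assumed validity.
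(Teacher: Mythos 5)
Your plan for items 1, 2 and 4--8 is sound: each of 1, 2, 4, 5, 6, 7 is a direct unfolding of the clauses in Table~\ref{table:truthConds}, and your observation that rule 8 needs no induction on the iteration count --- just two applications of the premise, once at $\Vals$ and once at each $\pi^\ast$-successor, using $\Vals \models \lbox{\pi^\ast}\psi$ as the invariant --- is exactly right. Note that the paper itself gives no proof of this proposition; it is recalled from \cite{HerzigLMT-Ijcai11}, so there is no in-paper argument to compare your route against.

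The genuine gap is item 3. You correctly sense that $\lbox{\pi}\lnot\phi \lequiv \lnot\lbox{\pi}\phi$ is not a generic \pdl\ principle and propose a structural induction on $\pi$, but that induction cannot close: the equivalence holds precisely when $\exto{\pi}$ is a total function on valuations, and this property is destroyed by $\ndet$, by tests, and by the star. Concretely, take $\pi = {+}\prop \ndet {-}\prop$ and $\phi = \prop$: every $\Vals$ has the two successors $\Vals \cup \set{\prop}$ and $\Vals \setminus \set{\prop}$, so $\lnot\lbox{\pi}\prop$ is valid while $\lbox{\pi}\lnot\prop$ is unsatisfiable; likewise $\lbox{\bot?}\lnot\phi$ is valid while $\lnot\lbox{\bot?}\phi$ is not. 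So item 3 as transcribed here, with an arbitrary $\pi$, is not a validity; in the cited source the commutation with negation is stated for atomic assignments $\alpha$ (where $\Vals^\alpha$ is the unique successor), and it extends to sequential compositions of assignments but no further. This is consistent with the tableau calculus itself, which only uses determinism at the level of R$\lbox{\alpha}$/R$\ldia{\alpha}$ and handles $\ndet$, $?$ and $^\ast$ under negation by branching rules. Your write-up should either restrict item 3 to $\alpha$ or flag the needed qualification; as planned, the induction step for $\pi_1 \ndet \pi_2$ would require proving something false.
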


It follows from Proposition~\ref{pro:principles}.1--\ref{pro:principles}.6 plus the rule of substitution of valid equivalences that the star-free fragment of $\dlpa$ is reducible to propositional logic.
This however fails to provide an efficient theorem proving method because the reduced formula might be exponentially longer than the original formula.
In \cite{BalbianiHerzigTroquard-Lics13}, it is also shown that the Kleene star can be eliminated in $\dlpa$,
i.e., there is an algorithm that translates every formula in $\lang$ to an equivalent formula in $\lang^{-\ast}$.
Such translation, however, also leads to much longer formulas.
In fact, this is a non-elementary reduction because it starts from the innermost Kleene star operator.

Satisfiability checking in $\dlpa$ is shown to be in EXPTIME in \cite{BalbianiHerzigTroquard-Lics13}.
The proof is given via a translation to satisfiability checking in $\pdl$.
For every $\dlpa$ formula $\phi$, the translation $\tr$ returns a $\pdl$ formula which is obtained by just replacing each assignment $\pm\prop$ by an abstract $\pdl$ program $\act_{\pm\prop}$.
To guarantee that the abstract programs behave the same way as the original assignment, the following set of formulas $\Gamma_\phi$ is also used:
\[
\Gamma_\phi =
\begin{aligned}[t]
& \set{\lbox{\act_{{+}\prop}}\prop : \prop \in \Props_\phi}~\cup\\
& \set{\lbox{\act_{{-}\prop}}\lnot\prop : \prop \in \Props_\phi}~\cup\\
& \set{\ldia{\act_{\pm\prop}}\top : \pm\prop \in \Props_\phi}~\cup\\
& \set{\propb \limp \lbox{\act_{\pm\prop}}\propb : \prop, \propb \in \Props_\phi, \prop \neq \propb}~\cup\\
& \set{\lnot\propb \limp \lbox{\act_{\pm\prop}}\lnot\propb : \prop, \propb \in \Props_\phi, \prop \neq \propb}
& \end{aligned}
\]

\begin{proposition}[\cite{BalbianiHerzigTroquard-Lics13}]
Let $U_\phi$ be the $\pdl$ program $(\bigcup_{\prop \in \Props_\phi}(\act_{{+}\prop} \cup \act_{{-}\prop}))^\ast$.
For every $\dlpa$ formula $\phi$, $\phi$ is $\dlpa$ satisfiable if and only if
\[
\tr(\phi) \land \lbox{U_\phi}\left(\bigwedge \Gamma_\phi\right)
\]
is $\pdl$ satisfiable.
\end{proposition}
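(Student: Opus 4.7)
The plan is to establish both directions by exhibiting an explicit model transformation and then proving a correspondence by simultaneous induction on formulas and programs appearing in $\phi$.

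For the forward direction, suppose $V \models_{\dlpa} \phi$. I would define a $\pdl$ Kripke model $M = \langle W, R, \nu \rangle$ whose worlds $W$ are all $\dlpa$-valuations reachable from $V$ by finite sequences of assignments $\pm\prop$ with $\prop \in \Props_\phi$, valuation is given by $w \models_{\pdl} \prop$ iff $\prop \in w$, and each abstract program $\act_{\pm\prop}$ is interpreted by the (functional) relation $\{(w, w^{\pm\prop}) : w \in W\}$. Every constraint in $\Gamma_\phi$ holds by construction at every $w \in W$: the value of $\prop$ at $w^{{+}\prop}$ (resp.\ $w^{{-}\prop}$) is true (resp.\ false); the updates are total, giving $\ldia{\act_{\pm\prop}}\top$; and $w^{\pm\prop}$ agrees with $w$ on every $\propb \neq \prop$. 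Since $W$ is closed under the $U_\phi$-transitions, $\lbox{U_\phi}(\bigwedge\Gamma_\phi)$ holds at $V$. The fact that $V \models_{\pdl} \tr(\phi)$ then follows from a mutual induction: for every subformula $\psi$ of $\phi$ and every $w \in W$, $w \models_{\pdl} \tr(\psi)$ iff $w \models_{\dlpa} \psi$; and for every subprogram $\pi$ of $\phi$, $(w,w') \in \|\tr(\pi)\|_M$ iff $(w,w') \in \exto{\pi}$. The cases $\alpha$, composition, test, union and star are routine using $R_{\act_{\pm\prop}}$ being exactly the $\dlpa$ update relation.

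For the converse, suppose $w_0$ in some $\pdl$ model $M = \langle W, R, \nu \rangle$ satisfies $\tr(\phi) \wedge \lbox{U_\phi}(\bigwedge\Gamma_\phi)$. Define the $\dlpa$ valuation $V_w = \{\prop \in \Props : w \models_{\pdl} \prop\}$ for each $w$, and set $V = V_{w_0}$. The key lemma is that for every world $w$ reachable from $w_0$ by $U_\phi$, every subformula $\psi$ of $\phi$ and every subprogram $\pi$ of $\phi$:
\begin{enumerate}
\item $w \models_{\pdl} \tr(\psi)$ iff $V_w \models_{\dlpa} \psi$;
\item $(w,w') \in \|\tr(\pi)\|_M$ implies $V_{w'} = V_w^\sigma$ for some $\sigma \in \seq(\pi)$, and for every $\sigma \in \seq(\pi)$ there exists $w'$ reachable by $\tr(\pi)$ from $w$ with $V_{w'} = V_w^\sigma$.
\end{enumerate}
Applying clause (1) to $\phi$ at $w_0$ yields $V \models_{\dlpa} \phi$.

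The main obstacle is controlling the abstract $\pdl$ programs $\act_{\pm\prop}$ in the converse direction, since a priori they may be non-functional and may connect states whose valuations differ wildly on $\Props \setminus \Props_\phi$. The constraints in $\Gamma_\phi$ are exactly what is needed: they force every $\act_{\pm\prop}$-successor $w'$ of a $U_\phi$-reachable $w$ to satisfy $V_{w'} \cap \Props_\phi = (V_w \cap \Props_\phi)^{\pm\prop}$, and the diamond axiom $\ldia{\act_{\pm\prop}}\top$ guarantees at least one such successor exists. Since $\lbox{U_\phi}(\bigwedge\Gamma_\phi)$ propagates to every $U_\phi$-reachable world, this "collapse on $\Props_\phi$" holds uniformly, so the $\act_{\pm\prop}$-transitions mirror the $\dlpa$ update on $\Props_\phi$ up to equivalence. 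The only delicate case in the induction is $\pi^\ast$, where one uses the fact that $\|\tr(\pi^\ast)\|_M$ and $\exto{\pi^\ast}$ are the reflexive transitive closures of the inductive relations, together with the existence clause of (2), to match traces on both sides.
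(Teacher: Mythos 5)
This proposition is imported from \cite{BalbianiHerzigTroquard-Lics13}; the paper states it without proof, so there is no in-paper argument to compare against. Your overall architecture is the standard and correct one for this kind of embedding: a canonical Kripke model built from update-reachable valuations for the left-to-right direction, and a truth/transition correspondence lemma proved by mutual induction for the converse, with $\Gamma_\phi$ doing exactly the work you describe (functionality up to $\Props_\phi$, seriality, and frame conditions for the untouched variables).

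There are, however, two places where clause (2) of your key lemma is misstated and would not survive as written. First, the equality $V_{w'} = V_w^\sigma$ cannot hold over all of $\Props$: the constraints in $\Gamma_\phi$ only govern variables in $\Props_\phi$, so distinct $\act_{\pm\prop}$-successors of $w$ may disagree arbitrarily outside $\Props_\phi$. You acknowledge this in your closing discussion, but the lemma must actually be stated as agreement on $\Props_\phi$, and clause (1) then additionally needs the (easy but necessary) observation that the truth of any subformula of $\phi$ at a valuation depends only on its restriction to $\Props_\phi$, since all assignments occurring in $\phi$ have domains inside $\Props_\phi$. Second, the existence half of clause (2) quantifies over all $\sigma \in \seq(\pi)$, which is too strong: $\seq$ records syntactic traces and ignores whether tests succeed, so for instance $\seq(\bot?;\alpha) = \set{\alpha}$ while $\exto{\bot?;\alpha} = \emptyset$ and $\tr(\bot?;\alpha)$ likewise has no successors. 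No witness $w'$ exists for such a $\sigma$, so the clause is false as stated. The correct formulation is in terms of the semantic relation: for every $\tuple{V_w, V'} \in \exto{\pi}$ there is a $\tr(\pi)$-successor $w'$ of $w$ with $V_{w'}$ and $V'$ agreeing on $\Props_\phi$, and conversely; the test case of the induction is then discharged via clause (1) rather than via traces. With these two repairs the argument goes through, including the $\pi^\ast$ case by taking reflexive-transitive closures on both sides.
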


Note that, even though this reduction is polynomial, a quadratically longer formula is produced.
Precisely, the size of $\Gamma_\phi$ is bounded by $5 \len(\phi)^2$.
Moreover, if we consider the star-free fragment of $\dlpa$, this transformation is sub-optimal, because of the Kleene star operator in $U_\phi$.\footnotemark
\footnotetext{%
For the star-free fragment, a transformation without the Kleene star operator is also possible.
In this case, the program $\bigcup_{\prop \in \Props_\phi}(\act_{{+}\prop} \cup \act_{{-}\prop})$ must be iterated up to $\len(\phi)$, but this leads to an even longer formula.
}

If follows from the next result that satisfiability checking in $\dlpa$ can be linearly reduced to model checking in $\dlpa$.

\begin{proposition}
\label{pro:sc_to_mc}
Let a formula $\phi \in \lang$ be given.
Let $\Props_\phi = \set{\prop_1, \dots, \prop_n}$.
and
let $M_\phi$ be the $\dlpa$ program $({+}\prop_1 \ndet {-}\prop_1); \dots; ({+}\prop_n \ndet {-}\prop_n)$.
Formula $\phi$ is satisfiable
if and only if
$\Vals \models \ldia{M_\phi}\phi$ for any model $\Vals$.
\end{proposition}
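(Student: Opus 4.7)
The plan is to reduce this to two simple observations about $M_\phi$: that it can reach any valuation on $\Props_\phi$ starting from an arbitrary model, and that truth of $\phi$ depends only on the values of variables in $\Props_\phi$.

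First I would establish a reachability lemma: for every model $V$,
\[
\{V' : \tuple{V,V'} \in \exto{M_\phi}\} \;=\; \{(V \setminus \Props_\phi) \cup S : S \subseteq \Props_\phi\}.
\]
This follows from the semantics of sequential composition and nondeterministic choice, unfolded over the $n$ factors of $M_\phi$: the $i$-th factor $({+}\prop_i \ndet {-}\prop_i)$ sets $\prop_i$ to an arbitrary truth value and leaves all other variables unchanged, so after all $n$ steps exactly the above set is reached. This is a routine induction on $n$.

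Second I would prove a locality lemma: for every $\phi \in \lang$ and every pair of models $V_1, V_2$ with $V_1 \cap \Props_\phi = V_2 \cap \Props_\phi$, we have $V_1 \models \phi$ iff $V_2 \models \phi$. This requires a simultaneous induction over formulas and programs occurring in $\phi$: the program-level statement to carry through the induction is that if $V_1$ and $V_2$ agree on $\Props_\phi$ and $\pi$ is a subprogram of $\phi$, then for every $V_1'$ with $\tuple{V_1,V_1'} \in \exto{\pi}$ there is $V_2'$ with $\tuple{V_2,V_2'} \in \exto{\pi}$ and $V_1' \cap \Props_\phi = V_2' \cap \Props_\phi$ (and symmetrically). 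The crucial input here is that $\Props_\phi$, as defined from $\cl(\phi)$ via the clause $\clp(\lbox{\alpha}\phi) \supseteq \dom(\alpha)$, contains all propositional variables syntactically appearing anywhere in $\phi$, including inside every assignment and every test; hence assignments $\alpha$ inside $\phi$ only modify variables in $\Props_\phi$, which is exactly what makes the inductive step for $\lbox{\alpha}\phi'$ go through. The Kleene star case follows from the definition $\exto{\pi^\ast} = \bigcup_{n \in \Nats_0}(\exto{\pi})^n$ and an inner induction on $n$.

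With both lemmas in hand, the proposition is immediate. For the ($\Leftarrow$) direction, if $V \models \ldia{M_\phi}\phi$ for some $V$, then some $V'$ reachable from $V$ satisfies $\phi$, so $\phi$ is satisfiable. For ($\Rightarrow$), if $W \models \phi$ for some $W$, fix any $V$ and set $V' = (V \setminus \Props_\phi) \cup (W \cap \Props_\phi)$; the reachability lemma yields $\tuple{V,V'} \in \exto{M_\phi}$, and since $V' \cap \Props_\phi = W \cap \Props_\phi$, the locality lemma gives $V' \models \phi$, hence $V \models \ldia{M_\phi}\phi$. Thus the property holds for every $V$, as stated.

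The main obstacle is the locality lemma, specifically getting the mutually inductive formulation right; once one commits to carrying the program-level agreement invariant along with the formula-level biconditional, every connective, including $\pi^\ast$, unfolds straightforwardly from Table~\ref{table:truthConds}.
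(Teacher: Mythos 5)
Your proposal is correct and follows essentially the same route as the paper, whose own proof is just the one-line observation that $\exto{M_\phi}$ relates all possible valuations over $\Props_\phi$ while leaving other variables unchanged; your reachability and locality lemmas are precisely the two facts that observation implicitly relies on, spelled out in full.
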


\begin{proof}
It suffices to see that the interpretation of the program $M_\phi$ relates all possible valuations in the vocabulary of $\phi$, while leaving the other variables unchanged.
\qed
\end{proof}

The operation $\lbox{M_\phi}$ works as a master modality and thus $\ldia{M_\phi}$ works as its dual.
Because it does not contain the Kleene star operator, the length of $\ldia{M_\phi}\phi$ is bounded by $3\len(\phi)$.
Also note that, in particular, $\phi$ is satisfiable if and only if $\Vals = \emptyset$ satisfies $\ldia{M_\phi}\phi$.
This means that the input $(\Vals, \ldia{M_\phi}\phi)$ for the model checking problem is also linear on the length of $\phi$.
Therefore, in order to perform satisfiability checking in $\dlpa$, one could take advantage of an efficient algorithm for model checking in $\dlpa$.
This motivates the tableaux methods presented in the next section.

Before concluding this section, let us recall that, in $\dlpa$, model checking has the same computational complexity as satisfiability checking.
This follows from Proposition~\ref{pro:sc_to_mc} above and Proposition~\ref{pro:mc_to_sc} below.

\begin{proposition}[\cite{BalbianiHerzigTroquard-Lics13}]
\label{pro:mc_to_sc}
For every valuation $\Vals$ and formula $\phi$, $\Vals \in \exto{\phi}$ if and only if the formula
\[
\phi \land \left( \bigwedge_{\prop \in \Props_\phi \cap \Vals}\prop \right) \land \left( \bigwedge_{\prop \in \Props_\phi \setminus \Vals}\lnot\prop \right)
\]
is $\dlpa$ satisfiable.
\end{proposition}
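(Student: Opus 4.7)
The plan is to derive the proposition from a locality (or coincidence) lemma that says the truth of a $\dlpa$ formula $\phi$ depends only on the valuation of the variables in $\Props_\phi$. The forward direction is then essentially immediate, while the backward direction uses this lemma to transfer satisfaction from the witnessing model back to $\Vals$.

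First I would dispatch the forward direction. Assume $\Vals \in \exto{\phi}$. Then $\Vals$ itself is a $\dlpa$ model that satisfies $\phi$, and by the semantics of propositional atoms it also satisfies $\prop$ for every $\prop \in \Vals$ and $\lnot\prop$ for every $\prop \notin \Vals$; in particular this holds for the variables in $\Props_\phi$. Hence $\Vals$ witnesses the satisfiability of the displayed conjunction.

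For the backward direction, suppose some model $\Vals'$ satisfies the conjunction. Then $\Vals'$ and $\Vals$ agree on $\Props_\phi$ (both make exactly the same subset of $\Props_\phi$ true, namely $\Props_\phi \cap \Vals$). The key ingredient is the following locality lemma, proved by mutual induction on the structure of formulas and programs: for every formula $\psi$ and every pair of valuations $\Vals_1, \Vals_2$ that agree on $\Props_\psi$, we have $\Vals_1 \models \psi$ iff $\Vals_2 \models \psi$; and for every program $\pi$ appearing in $\psi$, if $\Vals_1, \Vals_2$ agree on the propositional variables of $\pi$ and $\tuple{\Vals_1, \Vals_1'} \in \exto{\pi}$, then there exists $\Vals_2'$ such that $\tuple{\Vals_2, \Vals_2'} \in \exto{\pi}$ and $\Vals_1', \Vals_2'$ agree on those variables. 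Applying the formula clause to $\phi$ with $\Vals_1 = \Vals'$ and $\Vals_2 = \Vals$ yields $\Vals \models \phi$.

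The main obstacle is the program case of the locality lemma, which has to be carried out carefully by induction on the program structure. Atomic assignments are handled by observing that $\dom(\alpha) \subseteq \Props_\phi$ whenever $\alpha$ occurs in $\phi$ (this is built into the definition of $\clp(\lbox{\alpha}\phi)$ in Table~\ref{tab:cl}), so the updated valuations $\Vals_1^{\alpha}$ and $\Vals_2^{\alpha}$ still agree on $\Props_\phi$. The cases for $;$, $\ndet$ and $?$ are routine by the induction hypothesis. The $\pi^\ast$ case requires an inner induction on the number of iterations $n$, repeatedly invoking the one-step hypothesis on $\pi$. Once the lemma is established, the proposition follows by the argument sketched above.
\qed
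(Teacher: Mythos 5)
The paper does not actually prove Proposition~\ref{pro:mc_to_sc}: it is imported from \cite{BalbianiHerzigTroquard-Lics13} and no argument for it appears in the text or the appendix, so there is no in-paper proof to compare against. Your plan is nevertheless the natural one and is essentially correct: the forward direction is immediate, and the backward direction reduces to a coincidence (locality) lemma proved by mutual induction on formulas and programs, using the fact that $\Props_\phi = \Props \cap \cl(\phi)$ really does collect every variable occurring in $\phi$, including the domains of assignments and the variables of test formulas (this is exactly what the $\clp$ clauses in Table~\ref{tab:cl} guarantee). One point to tighten when you write it out: your program clause asserts that the matched successors $\Vals_1'$, $\Vals_2'$ agree only on the variables of $\pi$, but in the $\lbox{\pi}\psi_1$ case of the formula induction you need them to agree on $\Props_{\psi_1}$ as well in order to invoke the induction hypothesis at the successors. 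The fix is either to parametrize the program clause by an arbitrary set $P$ containing the variables of $\pi$ and preserve agreement on all of $P$, or to add the (easy) observation that $\exto{\pi}$ never changes a variable outside the domains of the assignments occurring in $\pi$, so agreement outside $\Props_\pi$ is inherited from $\Vals_1$ and $\Vals_2$. With that adjustment the induction goes through, including the $\pi^\ast$ case via the inner induction on the number of iterations.
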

%
\section{A Tableaux Method for Star-Free \dlpa}
\label{sec:tableaux_star_free}
%
In this section, we define a model checking procedure for the star-free fragment of $\dlpa$ using analytic tableaux.
We start with some useful definitions. 

A \emph{labeled formula} is a pair $\lambda = \tuple{\sigma, \phi}$, where $\sigma = \alpha_1 \dots \alpha_n$ is a (possibly empty) sequence of propositional assignments and $\phi \in \lang$.
A \emph{branch} is a set of labelled formulas.

\begin{definition}[Tableau]
\label{def:tableau}
Let $\Vals \subseteq \Props$ and $\phi_0 \in \lang^{-\ast}$.
The initial branch for $(\Vals, \phi_0)$ is the the set
$
\branch_0 =
\set{\tuple{\emptyseq, \prop} : \prop \in \Props_{\phi_0} \cap \Vals}
\cup
\set{\tuple{\emptyseq, \lnot\prop} : \prop \in \Props_{\phi_0} \setminus \Vals}
\cup
\set{\tuple{\emptyseq, \phi_0}}
$.
A tableau for $(\Vals, \phi_0)$ is a set of branches $\Tableau$ that satisfies one of the following two conditions:
\begin{enumerate}
\item
$\Tableau = \set{\branch_0}$,
which is called the \emph{initial tableau} for $(\Vals, \phi_0)$.
\item
$\Tableau = (\Tableau' \setminus \set{\branch}) \cup \Branches$, where 
$\Tableau'$ is a tableau for $(\Vals, \phi_0)$ containing the branch $\branch$ and 
$\Branches$ is a set of $k$ branches $\set{\branch \cup \branch_1, \dots, \branch \cup \branch_k}$ generated by one of the following \emph{tableau rules} below:\onlylong{\footnotemark
\footnotetext{Some of these rules are also presented in the more traditional numerator-denominator form in Table~\ref{tab:rules} of the Appendix.}
}
\begin{description}
\item[(R$\lnot$)]
$\tuple{\sigma, \lnot\lnot\phi} \in \branch$
implies
$k = 1$ and
$\branch_1 = \set{\tuple{\sigma, \phi}}$.
\item[(R$\land$)]
$\tuple{\sigma, \phi \land \psi} \in \branch$
implies
$k = 1$ and
$\branch_1 = \set{\tuple{\sigma, \phi}, \tuple{\sigma, \psi}}$.
\item[(R$\lor$)]
$\tuple{\sigma, \lnot(\phi \land \psi)} \in \branch$
implies
$k = 2$,
$\branch_1 = \set{\tuple{\sigma, \lnot\phi}}$
and $\branch_2 = \set{\tuple{\sigma, \lnot\psi}}$.
\item[(R$\lbox{\alpha}$)]
$\tuple{\sigma, \lbox{\alpha}\phi} \in \branch$
implies
$k = 1$ and
\\~\hfill
$\branch_1 = \set{ \tuple{\sigma\alpha, \phi} } \cup 
\set{ \tuple{\sigma\alpha, \prop} : \alpha(\prop) = \top } \cup
\set{ \tuple{\sigma\alpha, \lnot\prop} : \alpha(\prop) = \bot } $.
\item[(R$\ldia{\alpha}$)]
$\tuple{\sigma, \lnot\lbox{\alpha}\phi} \in \branch$
implies
$k = 1$ and
\\~\hfill
$\branch_1 = \set{ \tuple{\sigma\alpha, \lnot\phi} } \cup 
\set{ \tuple{\sigma\alpha, \prop} : \alpha(\prop) = \top } \cup
\set{ \tuple{\sigma\alpha, \lnot\prop} : \alpha(\prop) = \bot } $.
\item[(R$\lbox{?}$)]
$\tuple{\sigma, \lbox{\psi?}\phi} \in \branch$
implies
$k = 2$,
$\branch_1 = \set{\tuple{\sigma, \lnot\psi}}$
and $\branch_2 = \set{\tuple{\sigma, \phi}}$.
\item[(R$\ldia{?}$)]
$\tuple{\sigma, \lnot\lbox{\psi?}\phi} \in \branch$
implies
$k = 1$ and
$\branch_1 = \set{\tuple{\sigma, \psi}, \tuple{\sigma, \lnot\phi}}$.
\item[(R$\lbox{{;}}$)]
$\tuple{\sigma, \lbox{\pi_1; \pi_2}\phi} \in \branch$
implies
$k = 1$ and
$\branch_1 = \set{\tuple{\sigma, \lbox{\pi_1}\lbox{\pi_2}\phi}}$.
\item[(R$\ldia{{;}}$)]
$\tuple{\sigma, \lnot\lbox{\pi_1; \pi_2}\phi} \in \branch$
implies
$k = 1$ and
$\branch_1 = \set{\tuple{\sigma, \lnot\lbox{\pi_1}\lbox{\pi_2}\phi}}$.
\item[(R$\lbox{\ndet}$)]
$\tuple{\sigma, \lbox{\pi_1 \ndet \pi_2}\phi}$
implies
$k = 1$ and
$\branch_1 = \set{\tuple{\sigma, \lbox{\pi_1}\phi}, \tuple{\sigma, \lbox{\pi_2}\phi}}$.
\item[(R$\ldia{\ndet}$)]
$\tuple{\sigma, \lnot\lbox{\pi_1 \ndet \pi_2}\phi} \in \branch$
implies
$k = 2$,
$\branch_1 = \set{\tuple{\sigma, \lnot\lbox{\pi_1}\phi}}$
and $\branch_2 = \set{\tuple{\sigma, \lnot\lbox{\pi_2}\phi}}$.
%
%
\item[(RP1)]
$\set{\tuple{\sigma, \prop}, \tuple{\sigma\alpha, \psi}} \subseteq \branch$
for some $\psi$ and $\prop \not\in \dom(\alpha)$
implies
$k = 1$ and\\
$~$\hfill
$\branch_1 = \set{\tuple{\sigma\alpha, \prop}}$.
\item[(RP2)]
$\set{\tuple{\sigma, \lnot\prop}, \tuple{\sigma\alpha, \psi}} \subseteq \branch$
for some $\psi$ and $\prop \not\in \dom(\alpha)$
implies
$k = 1$ and\\
$~$\hfill
$\branch_1 = \set{\tuple{\sigma\alpha, \lnot\prop}}$.
\end{description}
\end{enumerate}
\end{definition}

The initial tableau corresponds to the input of the problem in the tableau.
Rules R$\lnot$, R$\land$ and R$\lor$ are the standard tableaux rules for Boolean connectives.
RP1 and RP2 (propagation rules) propagate literals whose the truth value is not changed by assignments:
if the model updated by $\sigma$ satisfies $\prop$ and $\alpha$ does not change the truth value of $\prop$ then the model updated by $\sigma\alpha$ also satisfies $\prop$.
The other rules just reflect the semantic definition of the corresponding programs.
For instance, for the rule R$\lbox{\alpha}$, if the model updated by the sequence of assignments $\sigma$ satisfies $\lbox{\alpha}\phi$ then the model updated by the sequence $\sigma\alpha$ satisfies $\phi$.
Note that they also correspond to the validities 1--6 in Proposition~\ref{pro:principles}.

A branch $\branch$ is \emph{blatantly inconsistent} if and only if $\branch$ contains both
$\tuple{\sigma, \phi}$ and $\tuple{\sigma, \lnot\phi}$,
for some $\sigma$ and $\phi$.
A branch $\branch$ is \emph{closed} if and only if it is blatantly inconsistent.
A tableau is closed if and only if all its branches are closed.
A tableau is \emph{open} if and only if it is not closed. 

The idea is that, if there is a closed tableau for the input $(\Vals, \phi_0)$ then $\Vals \not\models \phi_0$.
On the other hand, if there is no closed tableau for $(\Vals, \phi_0)$ then $\Vals \models \phi_0$.

\begin{example}
Table~\ref{tab:example1} shows how the method can be used to prove that the model $\Vals = \set{\prop, \propb}$ does not satisfy the formula $\phi_0 = \lnot\lbox{{+}\prop \ndet {-}\prop}\propb$.
In the table, lines~1--3 consist of the initial tableau for the input $(\Vals, \phi_0)$.
Rule applications are indicated between parentheses on the left of each line.
Line~4 is generated by the application of R$\lbox{\ndet}$ to line~3.
This generates two different branches.
The rule applications continue until both branches are closed.
\end{example}

\begin{table}
\[
\begin{array}{c}
\begin{array}{l@{\quad}l@{\quad}l@{\quad}l}
1. & \emptyseq & \prop\\
2. & \emptyseq & \propb\\
3. & \emptyseq &  \lnot\lbox{{+}\prop \ndet {-}\prop}\propb\\
\end{array}\\
\hline
\begin{array}{c|c}
\begin{array}[t]{l@{\quad}l@{\quad}l@{\quad}l}
4. & \emptyseq & \lnot\lbox{{+}\prop}\propb & \text{(R$\ldia{\ndet}$: 3)}\\
5. & {+}\prop & \prop & \text{(R$\ldia{\alpha}$: 4)}\\
6. & {+}\prop & \lnot\propb & \text{(R$\ldia{\alpha}$: 4)}\\
7. & {+}\prop & \propb & \text{(RP1: 2, 5)}\\
8. && \text{(closed)} & \text{(6, 7)}
\end{array}
& 
\begin{array}[t]{l@{\quad}l@{\quad}l@{\quad}l}
4. & \emptyseq & \lnot\lbox{{-}\prop}\propb & \text{(R$\ldia{\ndet}$: 3)}\\
5. & {-}\prop & \lnot\prop & \text{(R$\ldia{\alpha}$: 4)}\\
6. & {-}\prop & \lnot\propb & \text{(R$\ldia{\alpha}$: 4)}\\
7. & {+}\prop & \propb & \text{(RP1: 2, 5)}\\
8. && \text{(closed)} & \text{(6, 7)}
\end{array}
\end{array}
\end{array}
\]
\caption{%
\label{tab:example1}
Tableau for $\Vals = \set{\prop, \propb}$ and $\phi_0 = \lnot\lbox{{+}\prop \ndet {-}\prop}\propb$}
\end{table}

\begin{example}
Table~\ref{tab:example_2} shows how the method can be used to prove that the model $\Vals = \emptyset$ satisfy the formula
$\phi_0 = \lbox{\lnot\prop?; {+}\prop}\prop$.
Note that RP2 is not applicable to the labelled formulas in lines~1 and 5 because $\prop \in \dom({+}\prop)$.
Thus, the branch on the right remains open, which means that $\Vals \models \phi_0$.
\end{example}

\begin{table}
\[
\begin{array}{c}
\begin{array}{l@{\quad}l@{\quad}l@{\quad}l}
1. & \emptyseq & \lnot\prop\\
2. & \emptyseq & \lbox{\lnot\prop?; {+}\prop}\prop\\
3. & \emptyseq & \lbox{\lnot\prop?}\lbox{{+}\prop}\prop & \text{(R$\lbox{{;}}$: 2)}\\
\end{array}\\
\hline
\begin{array}[t]{c|c}
\begin{array}[t]{l@{\quad}l@{\quad}l@{\quad}l}
4. & \emptyseq & \lnot\lnot\prop & \text{(R$\lbox{?}$: 3)}\\
5. & \emptyseq & \prop & \text{(R$\lnot$: 4)}\\
&& \text{(closed)} & \text{(1, 5)}
\end{array}
& 
\begin{array}[t]{c}
\begin{array}[t]{l@{\quad}l@{\quad}l@{\quad}l}
4. & \emptyseq & \lbox{{+}\prop}\prop & \text{(R$\lbox{?}$: 3)}\\
5. & {+}\prop & \prop & \text{(R$\lbox{\alpha}$: 4)}\\
&& \text{(open)}
\end{array}\\
\end{array}
\end{array}
\end{array}
\]
\caption{%
\label{tab:example_2}
Tableau for $\Vals = \emptyset$ and $\phi_0 = \lbox{\lnot\prop?; {+}\prop}\prop$
}
\end{table}

In the sequel, we show the soundness of the method.
The idea is to show that, if $\Vals \models \phi_0$, then successive rule applications can never close the tableau.
But first, a useful definition and a lemma are presented.

\begin{definition}[Consistent Branch]
A branch $\branch$ is consistent if and only if
$\Vals^\sigma \models \phi$ for every $\tuple{\sigma, \phi} \in \branch$.
\end{definition}

\begin{lemma}[Consistency Preservation]
\label{lem:sat_preservation_1}
For each tableau rule $\rho$, if branch $\branch$ is consistent, then the set of branches $\Branches$ generated by the application of $\rho$ to $\branch$ contains a consistent branch.
\end{lemma}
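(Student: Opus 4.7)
The plan is to argue case-by-case over the tableau rule $\rho$, observing that in each case the claim reduces to a direct semantic fact about $\dlpa$, most of which already appear (in an abstract form) as the valid principles of Proposition~\ref{pro:principles}. Throughout, I assume $b$ is consistent, i.e.\ $\Vals^\sigma \models \phi$ for every $\tuple{\sigma,\phi} \in b$, and I use the triggering formula of $\rho$ as a witness $\tuple{\sigma,\chi} \in b$.

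For the Boolean rules (R$\lnot$, R$\land$, R$\lor$), the trigger $\tuple{\sigma,\chi}$ gives $\Vals^\sigma \models \chi$, and the truth conditions of $\lnot$ and $\land$ tell us exactly which of the produced $b_i$ are consistent; for R$\lor$, at least one of $\Vals^\sigma \models \lnot\phi$, $\Vals^\sigma \models \lnot\psi$ must hold when $\Vals^\sigma \models \lnot(\phi\land\psi)$, so the corresponding branch survives. The program-composition rules (R$\lbox{;}$, R$\ldia{;}$, R$\lbox{\ndet}$, R$\ldia{\ndet}$, R$\lbox{?}$, R$\ldia{?}$) are handled identically, each one appealing to the matching equivalence among items 2, 4, 5, 6 of Proposition~\ref{pro:principles} applied at the updated model $\Vals^\sigma$. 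For instance, for R$\lbox{?}$ the hypothesis $\Vals^\sigma \models \lbox{\psi?}\phi$ is equivalent to $\Vals^\sigma \models \psi \limp \phi$, so either $\Vals^\sigma \models \lnot\psi$ (making $b\cup b_1$ consistent) or $\Vals^\sigma \models \phi$ (making $b\cup b_2$ consistent).

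For the assignment rules R$\lbox{\alpha}$ and R$\ldia{\alpha}$, the trigger gives $\Vals^\sigma \models \lbox{\alpha}\phi$ (resp.\ $\Vals^\sigma \models \lnot\lbox{\alpha}\phi$), which by the semantic clause for $\lbox{\pi}$ and the functionality of $\exto{\alpha}$ yields $\Vals^{\sigma\alpha} \models \phi$ (resp.\ $\Vals^{\sigma\alpha} \models \lnot\phi$). The additional labelled formulas $\tuple{\sigma\alpha,\prop}$ for $\alpha(\prop)=\top$ and $\tuple{\sigma\alpha,\lnot\prop}$ for $\alpha(\prop)=\bot$ are consistent by the very definition $\Vals^{\sigma\alpha} = \{\prop : \Vals^\sigma \models \alpha(\prop)\}$.

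The propagation rules RP1, RP2 are the only rules where the ``triggering'' consists of two labelled formulas. For RP1, consistency of $b$ gives $\Vals^\sigma \models \prop$; since $\prop \notin \dom(\alpha)$, the definition of the update fixes $\alpha(\prop) = \prop$, so $\prop \in \Vals^{\sigma\alpha}$ iff $\Vals^\sigma \models \prop$, hence $\Vals^{\sigma\alpha} \models \prop$ and $b\cup b_1$ is consistent. The argument for RP2 is symmetric. The only mildly delicate point, and the one I expect to state carefully rather than skim, is the justification that $\prop \notin \dom(\alpha)$ really forces $\alpha(\prop)=\prop$ in the semantic clause for $\Vals^\alpha$; this is exactly the convention introduced just before Table~\ref{table:truthConds}, and once noted the propagation cases go through immediately. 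In each case a consistent branch among the $b\cup b_i$ has been exhibited, completing the argument.
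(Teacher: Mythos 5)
Your proposal is correct and follows essentially the same route as the paper's proof: a case analysis over the rules in which each case reduces to the corresponding semantic clause or valid principle of Proposition~\ref{pro:principles}, with the assignment rules handled via the definition of the update $\Vals^\alpha$ and the propagation rules via the convention $\alpha(\prop)=\prop$ for $\prop\notin\dom(\alpha)$. The only difference is one of detail: the paper spells out only the case of R$\lbox{\alpha}$ and leaves the rest to the reader, whereas you treat every family of rules explicitly, including the functionality of $\exto{\alpha}$ needed for R$\ldia{\alpha}$.
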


\begin{theorem}[Soundness]
\label{theo:soundness_1}
If $\Vals \models \phi_0$ then there is no closed tableau for $(\Vals, \phi_0)$.
\end{theorem}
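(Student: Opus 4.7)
The plan is to carry out a routine induction on the construction of the tableau, using the Consistency Preservation Lemma as the engine and observing that a consistent branch cannot be closed.

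First, I would verify that the initial branch $\branch_0$ for $(\Vals, \phi_0)$ is consistent. Since every element of $\branch_0$ has label $\emptyseq$ and $\Vals^{\emptyseq} = \Vals$, this reduces to three checks: $\Vals \models \prop$ for every $\prop \in \Props_{\phi_0} \cap \Vals$, $\Vals \models \lnot\prop$ for every $\prop \in \Props_{\phi_0} \setminus \Vals$, and $\Vals \models \phi_0$. The first two are immediate from the semantics of propositional variables, and the third is the hypothesis of the theorem.

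Next, I would show by induction on the number of rule applications used to build a tableau $\Tableau$ for $(\Vals, \phi_0)$ that $\Tableau$ always contains at least one consistent branch. The base case is the initial tableau $\set{\branch_0}$, handled in the previous step. For the inductive step, suppose $\Tableau = (\Tableau' \setminus \set{\branch}) \cup \Branches$ arises from $\Tableau'$ by applying some rule $\rho$ to the branch $\branch \in \Tableau'$. By the induction hypothesis, $\Tableau'$ has a consistent branch $\branch^*$. If $\branch^* \neq \branch$, then $\branch^* \in \Tableau$ and we are done. Otherwise $\branch$ is consistent, and Lemma~\ref{lem:sat_preservation_1} supplies a consistent branch in $\Branches \subseteq \Tableau$.

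Finally, I would observe that a consistent branch $\branch$ cannot be blatantly inconsistent: if both $\tuple{\sigma, \phi}$ and $\tuple{\sigma, \lnot\phi}$ were in $\branch$, then consistency would force $\Vals^\sigma \models \phi$ and $\Vals^\sigma \models \lnot\phi$, contradicting the semantic clause for negation. Hence every tableau for $(\Vals, \phi_0)$ contains an open branch, and so no tableau for $(\Vals, \phi_0)$ is closed.

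The only non-trivial step is the inductive argument, and even that is delegated to the already-stated Consistency Preservation Lemma, so there is no genuine obstacle here; the real work of soundness is hidden inside Lemma~\ref{lem:sat_preservation_1}, where one must check, rule by rule, that the semantic clauses of Proposition~\ref{pro:principles} together with the update semantics justify each decomposition (and, crucially, that the propagation rules RP1 and RP2 are sound because $\prop \not\in \dom(\alpha)$ ensures $\Vals^{\sigma\alpha}(\prop) = \Vals^\sigma(\prop)$).
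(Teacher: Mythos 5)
Your proposal is correct and follows essentially the same route as the paper's own proof: establish consistency of the initial branch from the hypothesis $\Vals \models \phi_0$, propagate consistency through rule applications via Lemma~\ref{lem:sat_preservation_1}, and conclude that a consistent branch cannot be blatantly inconsistent. You merely make explicit the induction on rule applications that the paper leaves implicit in the phrase ``it follows from Lemma~\ref{lem:sat_preservation_1} that all tableaux have at least one consistent branch''.
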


We now address the completeness of the method.
The idea is to show that, if the tableau remains open after all possible applications of the tableau rules, then $\Vals 
\models \phi_0$.
But first, some useful definitions are presented.

\begin{definition}[Witness]
A witness to rule $\rho$ in branch $\branch$ is a labelled formula $\tuple{\sigma, \phi} \in \branch$ allowing the application of $\rho$.
\end{definition}

For example, $\tuple{\emptyseq, \lnot\lnot\prop}$ is a witness to R$\lnot$, and 
$\tuple{\beta, \lnot\lbox{{+}\prop, {-}\propb}\prop}$ is a witness to R$\ldia{\alpha}$.
Moreover, the formula $\tuple{\sigma, \prop}$ is a witness to RP1 in $\branch$ if there is a formula $\tuple{\sigma\alpha, \psi} \in \branch$ and $\prop \not\in \dom(\alpha)$.

\begin{definition}[Saturated Tableau]
The label $\sigma$ in the branch $\branch$ is saturated under the tableau rule $\rho$ if and only if for each witness $\tuple{\sigma, \phi}$ to $\rho$ in $\branch$, $\branch$ contains some $\branch_i$ generated by the application of $\rho$ to $\branch$.
The branch $\branch$ is saturated under the tableau rule $\rho$ if and only if all its labels are saturated.
A branch is saturated if and only if it is saturated under all tableau rules.
A tableau is saturated if and only if all its branches are saturated.
\end{definition}

\begin{theorem}[Completeness]
\label{theo:completeness_1}
If there is no closed tableau for $(\Vals, \phi_0)$ then $\Vals \models \phi_0$.
\end{theorem}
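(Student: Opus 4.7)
The plan is to establish a Truth Lemma for a saturated open branch and deduce completeness from it. First, I would use the fact that $\phi_0 \in \lang^{-\ast}$ together with Lemma~\ref{lem:exe} to bound $\len(\sigma) \leq \len(\phi_0)$ for every execution trace arising in the tableau, so only finitely many labels can occur in any branch. Since each non-literal labelled formula lies in a finite closure derived from $\ecl(\phi_0)$, any strategy that applies every pending non-redundant rule terminates with a saturated tableau. Under the hypothesis that no closed tableau for $(\Vals, \phi_0)$ exists, this saturated tableau has at least one open branch $\branch$; I fix such a saturated open $\branch$ for the remainder of the argument.

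Next, I would state the Truth Lemma: for every $\tuple{\sigma, \phi} \in \branch$, $\Vals^\sigma \models \phi$. Applying it to $\tuple{\emptyseq, \phi_0} \in \branch$ immediately yields $\Vals \models \phi_0$. The lemma is proved by induction on $\len(\phi)$, with a nested induction on program structure for the case $\phi = \lbox{\pi}\psi$. The Boolean cases (R$\lnot$, R$\land$, R$\lor$) follow because saturation puts the reducts of $\tuple{\sigma, \phi}$ in $\branch$ at the same label $\sigma$, and the inductive hypothesis together with the semantics closes them. The cases for $?$, $;$, and $\ndet$ are analogous: saturation plus the validities in Proposition~\ref{pro:principles} reduce them to strictly smaller formulas at the same label. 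For R$\lbox{\alpha}$ and R$\ldia{\alpha}$, the principal labelled formula at $\sigma\alpha$ is strictly smaller, so the outer induction applies; the auxiliary literals $\tuple{\sigma\alpha, \prop}$ and $\tuple{\sigma\alpha, \lnot\prop}$ inserted for $\prop \in \dom(\alpha)$ are true in $\Vals^{\sigma\alpha}$ directly by the semantics of propositional assignments.

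The main obstacle is the atomic-literal case: showing that $\tuple{\sigma, \prop} \in \branch$ forces $\prop \in \Vals^\sigma$, and symmetrically for $\lnot\prop$. This case is not reducible to the outer induction, since an atomic literal can appear at a non-empty label $\sigma$ by being introduced from a \emph{larger} formula via R$\land$, R$\lnot$, or R$\lbox{?}$, or by R$\lbox{\alpha}$/R$\ldia{\alpha}$ placing the $\dom(\alpha)$-literals, or via the propagation rules RP1/RP2. I would handle it by a side-induction on $\len(\sigma)$. For $\sigma = \emptyseq$, the literals in $\branch$ at label $\emptyseq$ over $\Props_{\phi_0}$ are fixed by the initial tableau to match $\Vals$, and openness rules out the opposite literal being derivable at that label. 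For $\sigma = \sigma'\alpha$, the label can only be introduced into $\branch$ by a prior R$\lbox{\alpha}$ or R$\ldia{\alpha}$ application, which inserts the $\alpha$-prescribed literals matching $\Vals^{\sigma'\alpha}$ on $\dom(\alpha)$; saturation under RP1/RP2 then forwards the literal pattern from $\sigma'$ to $\sigma'\alpha$ for all $\prop \notin \dom(\alpha)$, and the inner inductive hypothesis at $\sigma'$ combined with openness at $\sigma'\alpha$ yields the required agreement with $\Vals^{\sigma'\alpha} = (\Vals^{\sigma'})^{\alpha}$. The crux of completeness is this propagation: without RP1/RP2, literal information at shorter prefixes would not reach $\sigma$, and openness alone would fail to align $\branch$ with $\Vals^\sigma$.
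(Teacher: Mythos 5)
Your proposal is correct and follows essentially the same route as the paper: both establish a truth lemma ($\Vals^\sigma \models \phi$ for every $\tuple{\sigma,\phi}$ in a saturated open branch) and both handle the crucial literal case at non-empty labels via the initial branch, the $\dom(\alpha)$-literals inserted by R$\lbox{\alpha}$/R$\ldia{\alpha}$, and propagation under RP1/RP2. The only difference is bookkeeping: the paper uses a single induction on $\len(\sigma)+\len(\psi)$ (exploiting $\len(\sigma_1)\leq\len(\pi_1)$ from Lemma~\ref{lem:exe} to descend directly to $\tuple{\sigma\sigma_1,\lbox{\pi_2}\psi_1}$ in the sequencing case), whereas you use a nested/lexicographic induction on formula length, program structure, and label length, which works equally well.
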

\section{An Optimal Procedure for Star-Free \dlpa}
\label{sec:optimal_star_free}
\newcommand{\Satsf}{mcTableau}
\begin{table}
\begin{tabular}{r@{:~~}l}
 1 & \Input{$(\Vals, \phi_0)$}\\
 2 & \Output{%
$\begin{cases}
\text{\True}, & \text{if $\branch$ is satisfiable}\\
\text{\False}, & \text{otherwise}
\end{cases}$
}\\
 3 & \Begin\\
 4 & \quad \Satsf$(\branch_0)$\\
 5 & \End
\bigskip\\
 6 & \Function{\Satsf}$(\branch)$\\
 7 & \Begin\\
 8 & \quad \If{$\branch$ contains an applicable witness $\lambda$ to a rule $\rho \in \set{\text{R}\lnot, \text{R}\land, \text{R}\ldia{?}, \text{R}\lbox{{;}}, \text{R}\ldia{{;}}, \text{R}\lbox{\ndet}}$}\\
 9 & \quad\quad $\branch_1 \leftarrow$ the branch generated by the application of $\rho$ to $\branch$ using $\lambda$ as witness\\
10 & \quad\quad mark $\lambda$ as `non-applicable'\\
11 & \quad\quad \Return{\Satsf$(\branch \cup \branch_1)$}\\
12 & \quad \ElseIf{$\branch$ contains an applicable witness $\lambda$ to a rule $\rho \in \set{\text{R}\lor, \text{R}\lbox{?}, \text{R}\ldia{\ndet}, \text{RC}}$}\\
13 & \quad\quad $\Branches \leftarrow$ the set of branches $\set{\branch_1, \dots, \branch_n}$ generated by the application of $\rho$ to $\branch$ using $\lambda$ as witness\\
14 & \quad\quad mark $\lambda$ as `non-applicable'\\
15 & \quad\quad \ForEach{$\branch_i \in \Branches$}\\
16 & \quad\quad\quad \If{\Satsf$(\branch \cup \branch_i)$ = \True}\\
17 & \quad\quad\quad\quad \Return{\True}\\
18 & \quad\quad\quad \EndIf\\
19 & \quad\quad \EndFor\\
20 & \quad\quad \Return{\False}\\
21 & \quad \EndIf\\
22 & \quad \textbf{while} \parbox[t]{9cm}{%
		there is an atomic program $\alpha$ such that $\branch$ contains an applicable witness $\lambda = \tuple{\sigma, \phi}$
		to rule $\rho \in \set{\text{R}\lbox{\alpha}, \text{R}\ldia{\alpha}}$,
		where $\phi = \lbox{\alpha}\psi$ or $\phi = \ldia{\alpha}\psi$
		\textbf{do}
		}\\
23 & \quad\quad $\branch_1 \leftarrow$ the branch generated by the application of $\rho$ to $\branch$ using $\lambda$ as witness\\
24 & \quad\quad mark $\lambda$ as `non-applicable'\\
25 & \quad\quad \textbf{while} \parbox[t]{9cm}{%
					$\branch$ contains an applicable witness $\lambda' = \tuple{\sigma, \phi'}$
					to rule $\rho \in \set{\text{R}\lbox{\alpha}, \text{R}\ldia{\alpha}}$,
					where $\phi' = \lbox{\alpha}\psi'$ or $\phi' = \ldia{\alpha}\psi'$
					\textbf{do}
					}\\
26 & \quad\quad\quad\quad $\branch'_1 \leftarrow$ the branch generated by the application of $\rho$ to $\branch$ using $\lambda'$ as witness\\
27 & \quad\quad\quad\quad mark $\lambda'$ as `non-applicable'\\
28 & \quad\quad\quad\quad $\branch_1 \leftarrow \branch_1 \cup \branch'_1$\\
29 & \quad\quad \EndWhile\\
30 & \quad\quad \While{$\branch \cup \branch_1$ contains an applicable witness $\lambda''$ to rule $\rho \in \set{\text{RP}_1, \text{RP}_2}}$\\
31 & \quad\quad\quad $\branch'_1 \leftarrow$ the branch generated by the application of $\rho$ to $\branch$ using $\lambda''$ as witness\\
32 & \quad\quad\quad mark $\lambda''$ as `non-applicable'\\
33 & \quad\quad\quad $\branch_1 \leftarrow \branch_1 \cup \branch'_1$\\
34 & \quad\quad \EndWhile\\
35 & \quad\quad \If{\Satsf$(\branch_1)$ = \False}\\
36 & \quad\quad\quad \Return{\False}\\
37 & \quad\quad \EndIf\\
38 & \quad \EndWhile\\
39 & \quad \Return{\True}\\
40 & \End\\
\end{tabular}
\caption{%
\label{tab:alg1}%
Algorithm implementing the tableaux method for star-free $\dlpa$
}
\end{table}

In this section we define an algorithm to check whether $\Vals \models \phi_0$, for $\phi_0 \in \lang^{-\ast}$.
Such an algorithm is displayed in Table~\ref{tab:alg1}.
It implements the tableaux method using the recursive function \Satsf.
It takes as argument a tableau branch $\branch$ and returns whether $\branch$ is consistent.
When called with the initial tableau for $(\Vals, \phi_0)$ it returns whether $\Vals \models \phi_0$.
The execution of \Satsf\ explores in a depth-first manner a tree whose nodes are tableau branches and each child is generated by the application of a tableau rule to its parent.

The rules are applied in a specific order and, after the application of a rule, the witness is marked `non-applicable', thus avoiding an infinite loop.
Lines 8--21 perform what is called `local saturation'.
That is, only rules that do not create labelled formulas with different labels than that of the witness are applied.
Its first part (lines 8--11) applies rules that do not create more than one branch in the tableau.
Its second part (lines 12--21) apples rules that create more than one branch in the tableau.
At the end of the local saturation, only witnesses to rules R$\lbox{\alpha}$, R$\ldia{\alpha}$ remain.
Note that no new label is created in the local saturation part, which means that there can be no witnesses to rules RP1 and RP2.
Then, in lines 22--38 the algorithm performs what is called `successor creation'.
First (line 22), it tests whether there is a successor to be created, i.e., if there is a witness $\lambda$ to R$\lbox{\alpha}$ or R$\ldia{\alpha}$.
It creates the successor (line 23) and then marks the witness as `non-applicable' (line 24).
After that (lines 25--34), it propagates the suitable formulas to the successor, as follows:
assume that the witness is $\lambda = \tuple{\sigma, \lbox{\alpha}\psi}$.
Then, for every labelled formula $\tuple{\sigma, \lbox{\alpha}\psi'}$ and $\tuple{\sigma, \ldia{\alpha}\psi'}$ there must be a labelled formula $\tuple{\sigma\alpha, \psi'}$ in the successor.
This is done in lines 25--29.
And also, every labelled formula $\tuple{\sigma, \prop}$ (resp. $\tuple{\sigma, \lnot\prop}$) must be propagated,
i.e., there must be a labelled formula $\tuple{\sigma\alpha, \prop}$ (resp. $\tuple{\sigma\alpha, \lnot\prop}$) in the successor $\branch_1$.
This is done in lines 30--34.
The last part (lines 35--37) makes a recursive call to \Satsf\ with the $\branch_1$.
The current branch is considered satisfiable if all recursive calls return \True.

This algorithm has two important features.
First, its successor creation part guarantees that each time \Satsf\ is called with branch $\branch$ as argument, all the labelled formulas in $\branch$ have the same label.
Second, the first feature implies that the list of successors created during successive recursive calls of \Satsf\ corresponds to one execution trace from input formula $\phi_0$.
These are the key arguments used in the proof of complexity result below.

\begin{theorem}[Termination]
\label{theo:termination_1}
The algorithm in Table~\ref{tab:alg1} halts for every input $(\Vals, \phi_0)$.
\end{theorem}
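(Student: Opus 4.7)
The plan is to establish termination by showing (a) the set of labelled formulas that can appear in any branch during execution is finite, and (b) each rule application marks a witness as `non-applicable' (and this status is never undone, since recursive calls only grow their branch), so every execution path can perform only finitely many rule applications. From (a) and (b) each while loop terminates, and the recursion tree is finite.

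For (a), I would prove two invariants preserved by every rule of Definition~\ref{def:tableau}. First, any labelled formula $\tuple{\sigma,\psi}$ appearing in any branch satisfies $\psi \in \ecl(\phi_0)$: this is verified rule by rule against Table~\ref{tab:cl}, and by Lemma~\ref{lem:cl}.3 the set $\ecl(\phi_0)$ is finite. Second, every such label $\sigma$ is a prefix of some trace in $\seq(\pi)$ for a subprogram $\pi$ of $\phi_0$---proved by induction on the chain of R$\lbox{\alpha}$/R$\ldia{\alpha}$/propagation steps that produced $\sigma$ from $\emptyseq$. Since $\phi_0 \in \lang^{-\ast}$, any such $\pi$ is also star-free, and Lemma~\ref{lem:exe}.1 gives $\len(\sigma) \leq \len(\pi) \leq \len(\phi_0)$. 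Combining the two bounds, the set of possible labelled formulas, and hence of possible (labelled formula, rule) witnesses, is finite.

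Given this, termination of each loop and of the line-35 recursion is a bookkeeping matter. The inner while loops at lines 25 and 30 each strictly shrink the set of applicable witnesses at every iteration. The local-saturation recursion at lines 11 and 16 does the same. The successor-creation recursion at line 35 is additionally bounded in depth by $\len(\phi_0)$, since each such call is made on a branch whose labels are exactly those of the calling branch extended by one further atomic program. The main obstacle will be formalizing the bookkeeping of the `non-applicable' marks across these three recursive call sites; the cleanest setup is to treat \Satsf's argument as the pair (branch, set of non-applicable witnesses) and to conclude by well-founded induction on the lexicographic pair ($\len(\phi_0)$ minus the maximum label length in the branch, number of remaining applicable witnesses).
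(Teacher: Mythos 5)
Your proof is correct in outline, but it takes a genuinely different route from the paper's. The paper's termination argument is a decreasing-measure one: each rule application permanently marks its witness as `non-applicable' and only adds labelled formulas that are \emph{shorter} than the witness, so an easy induction on the length of labelled formulas shows the recursion eventually reaches a closed or saturated branch; it never needs the closure $\ecl(\phi_0)$ or the trace-length bounds. You instead argue by finiteness: formulas stay inside the finite set $\ecl(\phi_0)$, labels have bounded length, hence the set of potential witnesses is finite, and permanent marking bounds the total number of rule applications. This is essentially the machinery the paper reserves for the complexity proof (Theorem~\ref{theo:complexity_1}), so your route delivers the quantitative bounds needed for PSPACE membership at the same time, at the cost of a heavier invariant. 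One step is imprecise: a label is generally \emph{not} a prefix of a trace in $\seq(\pi)$ for a single subprogram $\pi$ of $\phi_0$, because labels concatenate traces of nested programs --- for $\phi_0 = \lbox{\alpha_1}\lbox{\alpha_2}\prop$ the label $\alpha_1\alpha_2$ arises, yet every $\seq(\pi)$ for a subprogram $\pi$ contains only length-one traces --- so Lemma~\ref{lem:exe}.1 does not apply as stated. The bound $\len(\sigma) \leq \len(\phi_0)$ does hold, but it should be justified differently, e.g.\ by observing that each application of R$\lbox{\alpha}$ or R$\ldia{\alpha}$ strictly decreases the length of the box formula being unwound, so at most $\len(\phi_0)$ successor creations can be chained; note that the paper's own complexity proof glosses over the same point when it says the successors ``correspond to one execution trace'' of $\phi_0$.
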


\begin{theorem}[Complexity]
\label{theo:complexity_1}
The amount of memory used by the algorithm in Table~\ref{tab:alg1} is a polynomial function of the length of the input $(\Vals, \phi_0)$.
\end{theorem}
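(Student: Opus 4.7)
The plan is to bound separately (i) the space needed to store a single frame of the recursion stack of \Satsf, and (ii) the maximum depth of that stack; the theorem then follows since polynomial times polynomial is polynomial.

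For (i), the key is the \emph{single-label invariant}, already announced after the algorithm: whenever \Satsf\ is invoked with argument $\branch$, all labelled formulas in $\branch$ share a single common label $\sigma$. The initial call trivially satisfies this. The local-saturation recursive calls on lines~11 and~16 pass $\branch \cup \branch_1$ or $\branch \cup \branch_i$, and the rules involved (R$\lnot$, R$\land$, R$\lor$, R$\ldia{?}$, R$\lbox{?}$, R$\lbox{;}$, R$\ldia{;}$, R$\lbox{\ndet}$, R$\ldia{\ndet}$) only produce labelled formulas whose label equals the label of their witness, so the invariant is preserved. The successor-creation call on line~35 passes only $\branch_1$ --- not $\branch \cup \branch_1$ --- and after the inner loops of lines~25--34 every labelled formula in $\branch_1$ carries the single new label $\sigma\alpha$. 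Granted the invariant, each formula appearing at label $\sigma$ either belongs to $\ecl(\phi_0)$ (because $\cl$ is closed under the decomposition rules, in Fischer-Ladner style) or is a propagated literal over $\Props_{\phi_0} \subseteq \cl(\phi_0)$. By Lemma~\ref{lem:cl} the branch contains at most $2\len(\phi_0)$ formulas, each of length at most $\len(\phi_0)$, and the common label $\sigma$ is a prefix of some execution trace in $\seq(\phi_0)$ and hence has length at most $\len(\phi_0)$ by Lemma~\ref{lem:exe}.2. So each frame occupies polynomial space, together with polynomial-size bookkeeping for the `non-applicable' markings.

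For (ii), any root-to-leaf path in the recursion tree alternates between \emph{local-saturation segments} (successive calls from lines~11 and~16) and \emph{successor-creation steps} (calls from line~35). Each local-saturation call consumes one witness, marked as non-applicable on lines~10 or~14 and never re-activated; since all candidates are labelled formulas at the current label $\sigma$ and must lie in $\ecl(\phi_0)$, the length of any local-saturation segment is at most $2\len(\phi_0)$. Each successor-creation call extends the current label $\sigma$ to $\sigma\alpha$, and by the announced second feature the sequence of labels along a root-to-leaf path is an execution trace of $\phi_0$, bounded in length by $\len(\phi_0)$ thanks to Lemma~\ref{lem:exe}.2. Hence the stack depth is $O(\len(\phi_0)^2)$, and multiplying with the per-frame bound yields the claimed polynomial memory bound.

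The main obstacle is the careful verification that every formula produced during local saturation at a single label remains inside $\ecl(\phi_0)$, so that the local-saturation segment can be bounded by the closure size. This requires inspecting each decomposition rule and checking that its outputs belong to the closure of the witness formula, which is precisely the property baked into Table~\ref{tab:cl}. A secondary subtlety is that the invariant relies essentially on line~35 passing only $\branch_1$ to the recursive call, so that each frame's footprint stays confined to a single label rather than silently accumulating formulas from earlier labels in the trace.
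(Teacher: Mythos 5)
Your proof is correct and follows essentially the same route as the paper's: both rest on the single-label invariant for each call of \Satsf, bound the formulas per frame via Lemma~\ref{lem:cl}, bound the number of successor-creation steps via Lemma~\ref{lem:exe}, and multiply a polynomial per-frame cost by a polynomial recursion depth. Your version is in fact slightly more careful than the paper's (it justifies the invariant by inspecting which branch is passed at line~35, and accounts for formula length, not just formula count), but the decomposition and the key lemmas are identical.
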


Therefore, the algorithm in Table~\ref{tab:alg1} works in space polynomial in the length of the input.
This is an optimal algorithm, given that the satisfiability problem in star-free \dlpa is PSPACE-complete \cite{HerzigLMT-Ijcai11}.
%
\section{A Tableaux Method for Full \dlpa}
\label{sec:tableaux}
%
In this section, we define an extension of the tableaux method that also takes into account the Kleene star operator.

\begin{definition}[Tableau]
Let $(\Vals, \phi_0)$ be the input under concern
(thus, the initial tableau is the same as in Definition~\ref{def:tableau}).
The tableau rules for full $\dlpa$ are those of Definition~\ref{def:tableau} 
plus the following ones:
\begin{description}
\item[(R$\lbox{\ast}$)]
$\tuple{\sigma, \lbox{\pi^\ast}\phi} \in \branch$
implies
$k = 1$ and
$\branch_1 = \set{\tuple{\sigma, \phi}, \tuple{\sigma, \lbox{\pi}\lbox{\pi^\ast}\phi}}$.
\item[(R$\ldia{\ast}$)]
$\tuple{\sigma, \lnot\lbox{\pi^\ast}\phi} \in \branch$
implies
$k = 2$,
$\branch_1 = \set{\tuple{\sigma, \lnot\phi}}$
and $\branch_2 = \set{\tuple{\sigma, \phi}, \tuple{\sigma, \lnot\lbox{\pi}\lbox{\pi^\ast}\phi}}$.
\end{description}
\end{definition}

The two rules above reflect the fix point property of Proposition~\ref{pro:principles}.7.
For instance, if the model $\Vals^\sigma \models \lbox{\pi^\ast}\phi$ then $\Vals^\sigma \models \phi$ and also $\Vals^\sigma \models \lbox{\pi}\lbox{\pi^\ast}\phi$.

\begin{definition}[Fulfillment]
An eventuality $\tuple{\sigma, \lnot\lbox{\pi^\ast}\phi}$) is fulfilled in a tableau branch $\branch$ if and only if
there is a (possibly empty) execution trace $\sigma' \in \seq(\pi)$ such that $\tuple{\sigma\sigma', \lnot\phi} \in \branch$.
\end{definition}

\begin{definition}[Closed Branch]
A branch $\branch$ is closed if and only if
(1) $\branch$ is blatantly inconsistent or 
(2) $\branch$ is saturated and contains an unfulfilled eventuality.
\end{definition}

\begin{example}
\label{eg:star}
Table~\ref{tab:example_star} shows how the method can be used to prove that model $\Vals = \set{\prop, \propb}$ does not satisfy the formula $\phi_0 = \lnot\lbox{({+}\prop \ndet {-}\prop)^\ast}\propb$.
The leftmost branch is closed because it is blatantly inconsistent.
In the branch of the middle, the same pattern will be repeated indefinitely.
Thus, it is an infinite brunch, but it is saturated.
Since the eventuality in line 3 is not fulfilled, it is also closed.
The right-most branch is analogous to the one in the middle.
\end{example}

If the input formula contains a sub-formula of the form $\lnot\lbox{\pi^\ast}\phi$, the method invariably creates tableaux with infinite branches that repeat the same pattern over and over again, as in Example~\ref{eg:star}.
The repetition can be detected and it is possible to provide a terminating algorithm.
This is presented in Section~\ref{sec:optimal}.
Here, we show the correctness of the method presented so far.

\begin{table}
\[
\begin{array}[t]{c}
\begin{array}[t]{l@{\quad}l@{\quad}l@{\quad}l}
1. & \emptyseq & \prop\\
2. & \emptyseq & \propb\\
3. & \emptyseq & \lnot\lbox{({+}\prop \ndet {-}\prop)^\ast}\propb\\
\end{array}\\
\hline
\begin{array}[t]{c|c}
\begin{array}[t]{c}
\begin{array}[t]{l@{\quad}l@{\quad}l@{\quad}l}
4. & \emptyseq & \lnot\propb & \text{(R$\ldia{\ast}$: 3)}\\
5. & & \text{(blat. inc.)} & \text{(2, 4)}\\
\end{array}
\end{array}
&
\begin{array}[t]{c}
\begin{array}[t]{l@{\quad}l@{\quad}l@{\quad}l}
4. & \emptyseq & \lnot\lbox{{+}\prop \ndet {-}\prop}\lbox{({+}\prop \ndet {-}\prop)^\ast}\propb & \text{(R$\ldia{\ast}$: 3)}\\
\end{array}\\
\hline
\begin{array}[t]{c|c}
\begin{array}[t]{c}
\begin{array}[t]{l@{\quad}l@{\quad}l@{\quad}l}
5. & \emptyseq & \lnot\lbox{{+}\prop}\lbox{({+}\prop \ndet {-}\prop)^\ast}\propb & \text{(R$\ldia{\ndet}$: 3)}\\
6. & {+}\prop & \prop & \text{(R$\ldia{\alpha}$: 5)}\\
7. & {+}\prop & \lnot\lbox{({+}\prop \ndet {-}\prop)^\ast}\propb & \text{(R$\ldia{\alpha}$: 5)}\\
8. & {+}\prop & \propb & \text{(RP1: 2, 6)}\\
&& \quad\vdots\\
&& \text{(closed)}
\end{array}
\end{array}
&
\begin{array}[t]{c}
\begin{array}[t]{c@{\quad}l@{\quad}l@{\quad}l}
\vdots\\
\text{(closed)}\\
\end{array}
\end{array}
\end{array}
\end{array}
\end{array}
\end{array}
\]
\caption{%
\label{tab:example_star}
Tableau for $\Vals = \set{\prop, \propb}$ and $\phi_0 = \lnot\lbox{({+}\prop \ndet {-}\prop)^\ast}\propb$
}
\end{table}

\begin{lemma}[Consistency Preservation]
\label{lem:sat_preservation_2}
For each tableau rule $\rho$, if branch $\branch$ is consistent, then the set of branches $\Branches$ generated by the application of $\rho$ to $\branch$ contains a consistent branch.
\end{lemma}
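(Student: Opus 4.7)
The plan is to extend the case analysis that already establishes Lemma~\ref{lem:sat_preservation_1} for the star-free system by handling exactly the two new rules R$\lbox{\ast}$ and R$\ldia{\ast}$ introduced for the Kleene star; for every other rule the old argument applies unchanged. Recall that a branch $\branch$ is consistent iff $\Vals^\sigma \models \phi$ for every $\tuple{\sigma,\phi} \in \branch$, and I need to show that from any applicable witness at least one of the generated branches $\branch \cup \branch_i$ remains consistent. Both new cases will rely entirely on the fixed-point validity $\lbox{\pi^\ast}\phi \lequiv \phi \land \lbox{\pi}\lbox{\pi^\ast}\phi$ from Proposition~\ref{pro:principles}.7, so the reasoning stays local to the label $\sigma$ of the witness and no induction on the Kleene unfolding is needed.

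For R$\lbox{\ast}$, take the witness $\tuple{\sigma, \lbox{\pi^\ast}\phi} \in \branch$. Consistency of $\branch$ gives $\Vals^\sigma \models \lbox{\pi^\ast}\phi$, and the fixed-point equivalence immediately yields $\Vals^\sigma \models \phi$ and $\Vals^\sigma \models \lbox{\pi}\lbox{\pi^\ast}\phi$. Hence the unique generated branch $\branch_1 = \set{\tuple{\sigma, \phi}, \tuple{\sigma, \lbox{\pi}\lbox{\pi^\ast}\phi}}$ extends $\branch$ consistently.

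For R$\ldia{\ast}$, take the witness $\tuple{\sigma, \lnot\lbox{\pi^\ast}\phi} \in \branch$, so that $\Vals^\sigma \not\models \lbox{\pi^\ast}\phi$. I split on whether $\Vals^\sigma \models \phi$. If $\Vals^\sigma \not\models \phi$, then $\branch_1 = \set{\tuple{\sigma, \lnot\phi}}$ gives a consistent extension. Otherwise $\Vals^\sigma \models \phi$, and combined with $\Vals^\sigma \not\models \lbox{\pi^\ast}\phi$ the fixed-point equivalence forces $\Vals^\sigma \not\models \lbox{\pi}\lbox{\pi^\ast}\phi$, so $\branch_2 = \set{\tuple{\sigma, \phi}, \tuple{\sigma, \lnot\lbox{\pi}\lbox{\pi^\ast}\phi}}$ extends $\branch$ consistently. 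The two sub-cases are exhaustive, which finishes this case and hence the lemma.

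I do not expect a genuine obstacle: the two new rules are just the syntactic internalisation of the validity in Proposition~\ref{pro:principles}.7, and the consistency notion ignores fulfilment of eventualities (that concern is handled separately in the closure condition and in completeness). The only point requiring care is that R$\ldia{\ast}$ branches into $\lnot\phi$ versus $\set{\phi, \lnot\lbox{\pi}\lbox{\pi^\ast}\phi}$ rather than into the two disjuncts $\lnot\phi$ versus $\lnot\lbox{\pi}\lbox{\pi^\ast}\phi$; the case split above is chosen precisely so that $\phi$ is guaranteed in the second branch whenever it is picked.
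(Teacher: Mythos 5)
Your proof is correct and follows essentially the same route as the paper: both arguments note that the old cases carry over unchanged and reduce the two new rules R$\lbox{\ast}$ and R$\ldia{\ast}$ to the fixed-point validity of Proposition~\ref{pro:principles}.7. If anything, your treatment of R$\ldia{\ast}$ is slightly more careful than the paper's, which only observes that $\Vals^\sigma \models \lnot\phi$ or $\Vals^\sigma \models \lnot\lbox{\pi}\lbox{\pi^\ast}\phi$ and leaves implicit that the second generated branch also demands $\phi$; your explicit case split on whether $\Vals^\sigma \models \phi$ closes that small gap.
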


\begin{theorem}[Soundness]
\label{theo:soundness_2}
If $\Vals \models \phi_0$ then there is no closed tableau for $(\Vals, \phi_0)$.
\end{theorem}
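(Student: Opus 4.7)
The plan is to prove the contrapositive, extending the strategy of Theorem~\ref{theo:soundness_1} to accommodate the new closedness clause for unfulfilled eventualities. Under the assumption $\Vals \models \phi_0$, the initial branch $\branch_0$ is consistent because every labelled formula in $\branch_0$ has label $\emptyseq$, and $\Vals^{\emptyseq} = \Vals$ satisfies each literal describing $\Vals$ on $\Props_{\phi_0}$ as well as $\phi_0$ itself. Iterating Lemma~\ref{lem:sat_preservation_2} along the construction of any tableau for $(\Vals, \phi_0)$ guarantees that at least one branch remains consistent in every tableau produced, so it suffices to prove that a consistent branch cannot be closed.

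A consistent branch is not blatantly inconsistent, since no valuation satisfies both a formula and its negation. The substantive new obligation is to rule out the second closedness clause, which I would isolate as a lemma: if $\branch$ is consistent and saturated and $\tuple{\sigma, \lnot\lbox{\pi^\ast}\phi} \in \branch$, then this eventuality is fulfilled in $\branch$. Consistency gives $\Vals^\sigma \not\models \lbox{\pi^\ast}\phi$, so by the semantics of $\pi^\ast$ there is a minimal $n \in \Nats_0$ and a concatenation $\tau = \tau_1\cdots\tau_n$ with each $\tau_i \in \seq(\pi)$ such that $\Vals^{\sigma\tau} \not\models \phi$. I would induct on this $n$.

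For $n = 0$, saturation under R$\ldia{\ast}$ forces one of its two generated branches into $\branch$; the branch $\set{\tuple{\sigma, \phi}, \tuple{\sigma, \lnot\lbox{\pi}\lbox{\pi^\ast}\phi}}$ contradicts $\Vals^\sigma \not\models \phi$ by consistency, so $\tuple{\sigma, \lnot\phi} \in \branch$ and the empty trace fulfills the eventuality. For $n > 0$, minimality gives $\Vals^\sigma \models \phi$, so consistency excludes $\tuple{\sigma, \lnot\phi}$ and forces $\tuple{\sigma, \lnot\lbox{\pi}\lbox{\pi^\ast}\phi} \in \branch$. An auxiliary extraction step, proved by structural induction on $\pi$ through saturation under R$\ldia{{;}}$, R$\ldia{\ndet}$, R$\ldia{?}$, and R$\ldia{\alpha}$, then yields some $\tau_1 \in \seq(\pi)$ such that $\Vals^{\sigma\tau_1}$ is the first semantic $\pi$-successor on the minimal witness path and $\tuple{\sigma\tau_1, \lnot\lbox{\pi^\ast}\phi} \in \branch$; the minimal iteration count from this new position is at most $n - 1$, so the induction hypothesis fulfills the transported eventuality and concatenation delivers the trace for the original one.

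The main obstacle I anticipate is this auxiliary extraction when $\pi$ itself contains Kleene stars, since plain structural induction on $\pi$ cannot descend through inner stars. The cleanest fix is a lexicographic induction on the pair (minimal iteration count of the outermost eventuality, structural size of the leading program), which separates the two sources of decrease and lets the saturation-by-saturation case analysis proceed independently of the semantic witness length.
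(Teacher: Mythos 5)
Your reduction to ``a consistent branch cannot be closed'' and, in particular, your key lemma --- that in a consistent saturated branch every eventuality is fulfilled --- are false, and the failure sits exactly at the extraction step you flag as delicate, though for a different reason than the one you anticipate. Take $\Vals = \set{\prop}$ and $\phi_0 = \lnot\lbox{({+}\prop \ndet {-}\prop)^\ast}\prop$, so that $\Vals \models \phi_0$ and the minimal iteration count is $n = 1$ (via ${-}\prop$). Consider the branch that, at every application of R$\ldia{\ndet}$, contains the ${+}\prop$ disjunct. That branch is saturated, it is consistent (every labelled formula $\tuple{({+}\prop)^k, \psi}$ it contains is true at $\set{\prop}$), and its eventuality is never fulfilled, so it is closed. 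The point is that saturation only guarantees that \emph{some} disjunct generated by R$\ldia{\ndet}$, R$\lbox{?}$ or R$\ldia{\ast}$ lies in the branch, and consistency only certifies that this disjunct is \emph{true} at $\Vals^\sigma$ --- not that it lies on a minimal semantic witness path. In the example the extracted trace is $\tau_1 = {+}\prop$, giving $\Vals^{\tau_1} = \set{\prop}$ with minimal count again $1$: your measure does not decrease and the induction stalls. Your proposed lexicographic refinement addresses inner stars but not this.

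The repair is to move the argument from an arbitrary consistent saturated branch to the \emph{choice} of branch: strengthen the consistency-preservation lemma so that the distinguished consistent branch always selects, for formulas descended from an eventuality, a disjunct lying on a minimal witness path (in particular $\set{\tuple{\sigma,\lnot\phi}}$ for R$\ldia{\ast}$ as soon as $\Vals^\sigma \not\models \phi$). The branch so obtained is consistent \emph{and} fulfills all its eventualities, hence is open, which is what soundness requires. For comparison, the paper's proof argues at the same juncture that unfulfillment plus saturation forces $\tuple{\sigma\sigma', \phi} \in \branch$ for \emph{all} $\sigma' \in \seq(\pi^\ast)$ and derives $\Vals^\sigma \models \lbox{\pi^\ast}\phi$ as the contradiction; that claim too is only warranted for the traces that the branch's own disjunctive choices actually generate, so the same strengthening is needed there as well.
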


\begin{theorem}[Completeness]
\label{theo:completeness_2}
If there is no closed tableau for $(\Vals, \phi_0)$ then $\Vals \models \phi_0$.
\end{theorem}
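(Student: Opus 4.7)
The plan is to follow the structure of the completeness proof for the star-free fragment (Theorem~\ref{theo:completeness_1}), adding the treatment of the new rules R$\lbox{\ast}$, R$\ldia{\ast}$ and of the fulfillment component of closure. Arguing contrapositively, I assume no closed tableau exists for $(\Vals, \phi_0)$ and build a saturated open branch $\branch$ together with a \emph{truth lemma}: for every $\tuple{\sigma, \phi} \in \branch$, $\Vals^\sigma \models \phi$. Since $\tuple{\emptyseq, \phi_0} \in \branch_0 \subseteq \branch$, this yields $\Vals \models \phi_0$. To produce $\branch$, I apply the rules in a fair order so that every witness is eventually addressed; because R$\lbox{\ast}$ and R$\ldia{\ast}$ can unfold indefinitely, the resulting tableau tree may have infinite branches, and K\"onig's lemma on the finitely branching tree extracts at least one infinite branch that is saturated, not blatantly inconsistent, and has every eventuality fulfilled---otherwise it would be closed.

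The truth lemma goes by induction on a Fischer-Ladner-style well-founded measure on the formulas of $\cl(\phi_0)$ under which $\lbox{\pi}\lbox{\pi^*}\phi$ is strictly smaller than $\lbox{\pi^*}\phi$; this is the standard device for taming the self-referential unfolding of the star. All non-star cases (Booleans, tests, sequential composition, choice, and atomic assignments) are handled exactly as in the proof of Theorem~\ref{theo:completeness_1}: saturation under the matching rule exposes the required subformulas at the appropriate label, and the inductive hypothesis together with Table~\ref{table:truthConds} discharges them. Literals at the empty label come from $\branch_0$; literals at nonempty labels come either from the side-effects of R$\lbox{\alpha}$, R$\ldia{\alpha}$ for variables in $\dom(\alpha)$, or from the propagation rules RP1, RP2 for variables outside $\dom(\alpha)$.

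The new diamond-star case is the easier one: if $\tuple{\sigma, \lnot\lbox{\pi^*}\phi} \in \branch$, the fulfillment condition produces an execution trace $\sigma'$ with $\tuple{\sigma\sigma', \lnot\phi} \in \branch$ and witnessing $\pi^*$-reachability. The inductive hypothesis yields $\Vals^{\sigma\sigma'} \not\models \phi$; a short auxiliary induction on the structure of $\sigma'$ using the semantics in Table~\ref{table:truthConds} establishes $\tuple{\Vals^\sigma, \Vals^{\sigma\sigma'}} \in \exto{\pi^*}$; and hence $\Vals^\sigma \not\models \lbox{\pi^*}\phi$, as required.

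The main obstacle is the dual case $\tuple{\sigma, \lbox{\pi^*}\phi} \in \branch$, for which no finite witness is available. My plan is to verify the semantics of the star by exhibiting the invariant $S = \set{\Vals^{\sigma\sigma'} : \tuple{\sigma\sigma', \lbox{\pi^*}\phi} \in \branch}$ and checking that (i) $\Vals^\sigma \in S$ (trivially, $\sigma' = \emptyseq$), (ii) every element of $S$ satisfies $\phi$, using R$\lbox{\ast}$ saturation (which places $\tuple{\sigma\sigma', \phi}$ in $\branch$) together with the inductive hypothesis on the strictly smaller $\phi$, and (iii) $S$ is closed under $\exto{\pi}$. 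Items (i)--(iii) together with the semantics of $\lbox{\pi^*}$ give $\Vals^\sigma \models \lbox{\pi^*}\phi$. Item (iii) is the delicate step; I prove it as a sub-induction on the structure of $\pi$ establishing the more general propagation claim: whenever $\tuple{\tau, \lbox{\pi}\psi} \in \branch$ and $(\Vals^\tau, \Vals') \in \exto{\pi}$, some label $\tau'$ has $\Vals' = \Vals^{\tau'}$ and $\tuple{\tau', \psi} \in \branch$. The atomic, composition, and choice cases use saturation under R$\lbox{\alpha}$, R$\lbox{{;}}$ and R$\lbox{\ndet}$ respectively; the test case combines R$\lbox{?}$ with the outer truth-lemma inductive hypothesis on the smaller test sub-formula (to rule out the branch where the test was negated); and the nested-star case requires a further inner induction on the number of iterations of the inner program. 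This nested bookkeeping, and the verification that the Fischer-Ladner measure stays well-founded through it, is the genuinely subtle part of the argument.
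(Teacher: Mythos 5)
Your proposal is correct and follows the same overall skeleton as the paper's proof: extract an open saturated branch $\branch$, prove a truth lemma stating that $\Vals^\sigma \models \psi$ for every $\tuple{\sigma,\psi} \in \branch$, and conclude from $\tuple{\emptyseq,\phi_0} \in \branch_0 \subseteq \branch$. The non-star cases and the $\lnot\lbox{\pi^\ast}$ case (via fulfillment) are handled identically in both. Where you genuinely diverge is the $\lbox{\pi^\ast}\psi_1$ case and the induction measure. The paper keeps the measure $\len(\sigma)+\len(\psi)$ from Theorem~\ref{theo:completeness_1} and argues that R$\lbox{\ast}$-saturation places $\tuple{\sigma\sigma',\psi_1}$ in $\branch$ for every execution trace $\sigma' \in \seq(\pi^\ast)$, then chains through $\Vals^\sigma \models \lbox{\pi^n}\psi_1$ for all $n$; this is shorter but leans on an unstated correspondence between $\seq(\pi)$ and $\exto{\pi}$ (delicate for tests) and on an induction measure that does not obviously decrease, since $\len(\sigma\sigma')+\len(\psi_1)$ can exceed $\len(\sigma)+\len(\lbox{\pi^\ast}\psi_1)$ for long traces $\sigma'$. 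You instead switch to a Fischer--Ladner-style well-founded order and verify the semantics of the star directly by exhibiting an invariant set $S$ closed under $\exto{\pi}$, proved via an explicit propagation lemma with a sub-induction on $\pi$. Your route is longer and the nested bookkeeping you flag is real work, but it buys a self-contained argument that repairs the two gaps the paper glosses over; the only soft spot on your side is the appeal to K\"onig's lemma, which by itself yields an infinite branch but not one that is saturated with all eventualities fulfilled --- that still requires a fairness argument plus the observation that if every branch were closed the tableau would be closed, which is essentially what the paper also takes for granted.
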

\section{An EXPTIME Procedure for Full \dlpa}
\label{sec:optimal}
In this section, we define a procedure to model check formulas in $\lang$.
As before, we define an algorithm.
Here, it must detect the aforementioned repetitions of the applications of R$\ldia{\ast}$ in the tableau.
This is done by performing equality tests.
A label $\sigma_1$ is said to be equal to a label $\sigma_2$ if and only if the set of formulas labelled by $\sigma_1$ and $\sigma_2$ are the same.
More formally we have:

\begin{definition}[Equality]
Let $\sigma_1$ and $\sigma_2$ be two labels in the tableau $\Tableau$.
Label $\sigma_1$ is equal to label $\sigma_2$ (noted $\sigma_1 = \sigma_2$)
if and only if
there are two branches $\branch_1, \branch_2 \in \Tableau$ such that $\set{\phi : \tuple{\sigma_1, \phi} \in \branch_1} = \set{\phi : \tuple{\sigma_2, \phi} \in \branch_2}$.
\end{definition}

An equality test between labels can prevent the tableau to enter in an infinite loop.
Then one can try to provide an algorithm that is similar to the one in Section~\ref{sec:optimal_star_free}, by first adding rules R$\lbox{\ast}$ and R$\ldia{\ast}$ in their suitable places and the equality test just before the exploration of a new successor.
Such an algorithm works, but is not optimal.
For instance, the application of the method to the formula $\lbox{({+}\prop_1 \ndet {-}\prop_1 \ndet \cdots \ndet {+}\prop_n \ndet {-}\prop_n)^\ast}\prop$ creates $2^n$ different successors from a single tableau branch.
Then such a method may explore a tree whose the number of nodes is bounded by $2^{2^{\len(\phi_0)}}$.
However, satisfiability in $\dlpa$ is proven to be in EXPTIME.

A different technique than that in Section~\ref{sec:optimal_star_free} must be employed in order to obtain a more efficient method for full $\dlpa$.
Such a technique is implemented in the algorithm of Table~\ref{tab:alg2}.
It is somewhat similar to the algorithm in Section~\ref{sec:optimal_star_free}, but there are some important differences.
The most important ones are the addition of the equality test in lines 17--19 and the fact that this algorithm now maintains the entire tableau $\Tableau$ in memory.
It does not uses a recursive function anymore, for it now uses the tableau $\Tableau$ as the search tree.
Once the initial tableau for $(\Vals, \phi_0)$ is created in line 4, it enters a loop that finishes when $\Tableau$ is closed or saturated (recall that a branch is also considered to be closed if it is saturated and contains an unfulfilled eventuality).
As before, there is a `local saturation' part (lines 9--16) and a `successor creation' part (lines 20--33).
In lines 36--40, the algorithm tests whether $\Tableau$ is still open to return the right answer.

\begin{table}
\begin{tabular}{r@{:~~}l}
 1 & \Input{$(\Vals, \phi_0)$}\\
 2 & \Output{%
$\begin{cases}
\text{\True}, & \text{if $\phi_0$ is satisfiable}\\
\text{\False}, & \text{otherwise}
\end{cases}$
}\\
 3 & \Begin\\
 4 & \quad $\Tableau \leftarrow \set{\branch_0}$\\
 5 & \quad \While{$\Tableau$ is open and unsaturated}\\
 6 & \quad\quad pick an open and unsaturated branch $\branch \in \Tableau$\\
 7 & \quad\quad \While{$\branch$ is open and unsaturated}\\
 8 & \quad\quad\quad pick an open unsaturated label $\sigma$ of $\branch$\\
 9 & \quad\quad\quad \If{$\lambda = \tuple{\sigma, \phi} \in \branch$ is an applicable witness to a rule $\rho \in \set{\text{R}\lnot, \text{R}\land, \text{R}\ldia{?}, \text{R}\lbox{{;}}, \text{R}\ldia{{;}}, \text{R}\lbox{\ndet}, \text{R}\lbox{\ast}}$}\\
10 & \quad\quad\quad\quad $\branch_1 \leftarrow$ the branch generated by the application of $\rho$ to $\branch$ using $\lambda$ as witness\\
11 & \quad\quad\quad\quad mark $\lambda$ as `non-applicable'\\
12 & \quad\quad\quad\quad $\Tableau \leftarrow (\Tableau \setminus \set{\branch}) \cup \set{\branch \cup \branch_1}$\\
13 & \quad\quad\quad \ElseIf{$\lambda = \tuple{\sigma, \phi} \in \branch$ is an applicable witness to a rule $\rho \in \set{\text{R}\lor, \text{R}\lbox{?}, \text{R}\ldia{\ndet}, \text{R}\ldia{\ast}, \text{RC}}$}\\
14 & \quad\quad\quad\quad $\set{\branch_1, \branch_2} \leftarrow$ the branches generated by the application of $\rho$ to $\branch$ using $\lambda$ as witness\\
15 & \quad\quad\quad\quad mark $\lambda$ as `non-applicable'\\
16 & \quad\quad\quad\quad $\Tableau \leftarrow (\Tableau \setminus \set{\branch}) \cup \set{\branch \cup \branch_1, \branch \cup \branch_2}$\\
17 & \quad\quad\quad \ElseIf{there is a label $\sigma'$ in $\Tableau$ such that $\sigma = \sigma'$}\\
18 & \quad\quad\quad\quad mark all formulas in $\branch$ labelled by $\sigma$ as `non-applicable'\\
19 & \quad\quad\quad\quad \If{$\sigma'$ is closed} close branch $\branch$ \EndIf\\
20 & \quad\quad\quad \textbf{else if} \parbox[t]{9cm}{%
					there is an atomic program $\alpha$
					such that $\branch$ contains an applicable witness $\lambda = \tuple{\sigma, \phi}$
					to rule $\rho \in \set{\text{R}\lbox{\alpha}, \text{R}\ldia{\alpha}}$,
					where $\phi = \lbox{\alpha}\psi$ or $\phi = \ldia{\alpha}\psi$
					\textbf{do}
					}\\
21 & \quad\quad\quad\quad $\branch_1 \leftarrow$ the branch generated by the application of $\rho$ to $\branch$ using $\lambda$ as witness\\
22 & \quad\quad\quad\quad mark $\lambda$ as `non-applicable'\\
23 & \quad\quad\quad\quad \textbf{while} \parbox[t]{9cm}{%
						$\branch$ contains an applicable witness $\lambda' = \tuple{\sigma, \phi'}$
						to rule $\rho \in \set{\text{R}\lbox{\alpha}, \text{R}\ldia{\alpha}}$,
						where $\phi' = \lbox{\alpha}\psi'$ or $\phi' = \ldia{\alpha}\psi'$
						\textbf{do}
						}\\
24 & \quad\quad\quad\quad $\branch'_1 \leftarrow$ the branch generated by the application of $\rho$ to $\branch$ using $\lambda'$ as witness\\
25 & \quad\quad\quad\quad mark $\lambda'$ as `non-applicable'\\
26 & \quad\quad\quad\quad $\branch_1 \leftarrow \branch_1 \cup \branch'_1$\\
27 & \quad\quad\quad\quad \EndWhile\\
28 & \quad\quad\quad\quad \While{$\branch \cup \branch_1$ contains an applicable witness $\lambda''$ to rule $\rho \in \set{\text{RP}_1, \text{RP}_2}}$\\
29 & \quad\quad\quad\quad\quad $\branch'_1 \leftarrow$ the branch generated by the application of $\rho$ to $\branch$ using $\lambda''$ as witness\\
30 & \quad\quad\quad\quad\quad mark $\lambda''$ as `non-applicable'\\
31 & \quad\quad\quad\quad\quad $\branch_1 \leftarrow \branch_1 \cup \branch'_1$\\
32 & \quad\quad\quad\quad \EndWhile\\
33 & \quad\quad\quad\quad $\Tableau \leftarrow (\Tableau \setminus \set{\branch}) \cup \set{\branch \cup \branch_1}$\\
34 & \quad\quad\quad \EndIf\\
35 & \quad\quad \EndWhile\\
36 & \quad \EndWhile\\
37 & \quad \If{$\Tableau$ is open}\\
38 & \quad\quad \Return{\True}\\
39 & \quad \Else\\
40 & \quad \quad \Return{\False}\\
41 & \quad \EndIf\\
42 & \End\\
\end{tabular}
\caption{%
\label{tab:alg2}%
Algorithm implementing the tableaux method for $\lang$
}
\end{table}

\begin{theorem}[Termination]
\label{theo:termination_2}
The algorithm in Table~\ref{tab:alg2} halts for every input $(\Vals, \phi_0)$.
\end{theorem}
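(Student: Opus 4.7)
The plan is to show termination by bounding the size of the tableau $\Tableau$ maintained by the algorithm in Table~\ref{tab:alg2} and arguing that each node is processed only finitely many times. First, I would prove by induction on the sequence of rule applications that every labelled formula $\tuple{\sigma, \phi}$ occurring in $\Tableau$ satisfies $\phi \in \ecl(\phi_0)$. A case analysis on the rules (Boolean, composition, choice, test, assignment, the Kleene-star unfoldings R$\lbox{\ast}$ and R$\ldia{\ast}$, and the propagation rules RP1 and RP2) shows that each rule produces only formulas built from subformulas of its witness or their negations; since the witness itself is in $\ecl(\phi_0)$ by induction hypothesis, so are the formulas it produces. By Lemma~\ref{lem:cl}.3, $\card(\ecl(\phi_0)) \leq 2\len(\phi_0)$, so there are at most $2^{2\len(\phi_0)}$ distinct ``label contents'' $\set{\phi : \tuple{\sigma, \phi} \in \branch}$.

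Second, for any fixed label $\sigma$ only finitely much work is done: every witness at $\sigma$ is a pair $(\sigma, \phi)$ with $\phi \in \ecl(\phi_0)$, and every rule application marks its witness as non-applicable (lines 11, 15, 22, 25, 30), so the local-saturation block (lines 9--16) and the propagation block (lines 28--32) each fire at most $\card(\ecl(\phi_0))$ times at $\sigma$.

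The main obstacle is the outer successor-creation loop (lines 20--33), which, if unchecked, generates fresh labels $\sigma\alpha$ indefinitely because R$\lbox{\ast}$ and R$\ldia{\ast}$ re-introduce boxed-star formulas at every successor, as illustrated in Example~\ref{eg:star}. This is precisely what the equality test in lines 17--19 is designed to block: as soon as the algorithm selects a label $\sigma$ whose content coincides with that of some previously explored label $\sigma'$ in $\Tableau$, every formula at $\sigma$ is marked non-applicable (and $\branch$ is closed if $\sigma'$ is closed), so no further successors beneath $\sigma$ are ever created. Consequently, along any descending chain of labels in the tree of successors, the attached formula sets are pairwise distinct, so the chain has length at most $2^{2\len(\phi_0)}$.

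Combining these observations, $\Tableau$ is a finitely branching tree (each rule produces at most two branches, and local saturation plus propagation create no new labels), of depth bounded by $2^{2\len(\phi_0)}$, in which every node is the site of only finitely many rule applications. Hence $\Tableau$ becomes finite after finitely many iterations, both \textbf{while} loops in lines 5 and 7 terminate, and the algorithm halts.
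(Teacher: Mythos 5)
Your proposal is correct and follows essentially the same route as the paper's proof: local work terminates because each witness is marked non-applicable and all formulas stay within the finite set $\ecl(\phi_0)$, while successor creation terminates because the equality test forbids two expanded labels with the same formula content, bounding the number of labels by $2^{2\len(\phi_0)}$. Your version is if anything slightly more uniform (you use closure membership everywhere, where the paper uses a decreasing-length argument for the local-saturation phase), but the key ideas coincide.
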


\begin{theorem}[Complexity]
\label{theo:complexity_2}
The amount of time used by the algorithm in Table~\ref{tab:alg2} is an exponential function of the length of the input $(\Vals, \phi_0)$.
\end{theorem}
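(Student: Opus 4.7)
My plan is to bound the running time of the algorithm by bounding the total size of the tableau $\Tableau$ it builds together with the cost of each elementary operation, showing both to be singly exponential in $\len(\phi_0)$.

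First, I will prove by induction on the sequence of rule applications that every labelled formula $\tuple{\sigma,\phi}$ that is ever added to $\Tableau$ satisfies $\phi\in\ecl(\phi_0)$. The initial branch contains only formulas from $\cl(\phi_0)$, and each tableau rule---including the new R$\lbox{\ast}$ and R$\ldia{\ast}$---produces formulas that lie in the extended closure of its witness, hence in $\ecl(\phi_0)$ by the induction hypothesis. Using Lemma~\ref{lem:cl}, $\card(\ecl(\phi_0))\leq 2\len(\phi_0)$, so the set of formulas carried by any single label is a subset of a polynomial-sized universe, and there are at most $2^{2\len(\phi_0)}$ distinct such subsets.

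These subsets are exactly the equivalence classes induced by the equality relation on labels. The test at lines~17--19 ensures that once the label $\sigma$ currently being processed has the same formula set as some label $\sigma'$ already present in $\Tableau$, every labelled formula at $\sigma$ is marked non-applicable, blocking both successor creation (R$\lbox{\alpha}$/R$\ldia{\alpha}$) and any further splits at $\sigma$. Hence only one representative of each equivalence class is ever genuinely developed, and the number of (class, formula) pairs that can serve as a witness for a branching or successor-generating rule is at most $2^{O(\len(\phi_0))}\cdot O(\len(\phi_0))=2^{O(\len(\phi_0))}$. Each branch thus contains at most $2^{O(\len(\phi_0))}$ labelled formulas, and the number of branches is bounded by the number of splitting-rule firings, which by the marking discipline is again $2^{O(\len(\phi_0))}$; hence $|\Tableau|=2^{O(\len(\phi_0))}$.

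Every elementary operation---rule application, closure and saturation tests, and the equality test (which compares a label against at most $|\Tableau|$ others)---runs in time polynomial in $|\Tableau|$, and the outer loops iterate at most $|\Tableau|$ times in total, yielding an overall running time of $2^{O(\len(\phi_0))}$. The main obstacle I anticipate is the branching bookkeeping: one must argue carefully that the non-applicable marking on a split witness is correctly inherited by the two daughter branches produced by that split, and that the global nature of the equality test really does collapse what would otherwise be a doubly-exponential proliferation of branches into a singly-exponential one. A careful charging argument, assigning each elementary step to a unique (equivalence class, formula) pair, should close this gap.
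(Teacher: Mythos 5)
Your proposal follows essentially the same route as the paper's proof: every formula occurring in the tableau lies in $\ecl(\phi_0)$, so by Lemma~\ref{lem:cl} there are at most $2^{2\len(\phi_0)}$ distinct label classes, at most $2\len(\phi_0)$ rule applications per label, and an equality test costing $2^{2\len(\phi_0)}$, yielding a singly exponential product. The one point where you go beyond the paper --- bounding the number of branches --- is precisely the point you admit leaving open, and your interim claim that the branch count is the number of splitting firings and hence $2^{O(\len(\phi_0))}$ does not follow as stated (each firing doubles a branch, so a per-branch firing bound of $2^{O(\len(\phi_0))}$ only gives a doubly exponential leaf count a priori); however, the paper's own proof is equally silent on this and simply multiplies per-label work by the number of label classes, so your argument matches the published one in both its structure and its unaddressed step.
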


Thus, the algorithm in Table~\ref{tab:alg2} works in time exponential on $\len(\phi_0)$.
This is as expected, given that the model checking problem in full $\dlpa$ is in EXPTIME \cite{HerzigLMT-Ijcai11}.
%
\section{Discussion and Conclusion}
\label{sec:conclu}
%
In this paper, we have defined a linear reduction of satisfiability checking into model checking in $\dlpa$.
We also define analytic tableaux methods for model checking formulas in the star-free fragment and in full $\dlpa$.
The complexity of these methods match the complexity class of their respective problems.
In the sequel, we compare such methods to similar approaches and discuss possible improvements and extensions.
\paragraph{Comparisons.}
The methods presented in this paper have been inspired by others already proposed in the literature.
For instance, De Giacomo and Massacci \cite{DeGiacomoMassacci00} (see also \cite{HustadtSchmidt10}) inspired the technique for the Kleene star.
As already mentioned, the naive strategy would generate tableau branches with size exponential in the length of the input formula.
The idea of keeping the tree in memory and perform equality tests comes from that work.
%
%
%
%
\paragraph{Assignments of Propositional Variables to Formulas.}
\dlpa can be extended with assignments $\alpha$ to formulas in $\lang$, instead of the simpler $\set{\top, \bot}$.
The corresponding tableau rule R$\lbox{\alpha}$ would be as follows:
\[
\begin{array}{c}
\sigma: \lbox{\alpha}\phi\\
\hline
\begin{array}{l|l|c|l}
\begin{array}{l}
\sigma: \psi_1\\
\sigma: \psi_2\\
\vdots\\
\sigma: \psi_n\\
\sigma\alpha: \prop_1\\
\sigma\alpha: \prop_2\\
\vdots\\
\sigma\alpha: \prop_n\\
\sigma\alpha: \phi
\end{array}
&
\begin{array}{l}
\sigma: \lnot\psi_1\\
\sigma: \psi_2\\
\vdots\\
\sigma\alpha: \psi_n\\
\sigma\alpha: \lnot\prop_1\\
\sigma\alpha: \prop_2\\
\vdots\\
\sigma\alpha: \prop_n\\
\sigma\alpha: \phi
\end{array}
&
\dots
&
\begin{array}{l}
\sigma: \lnot\psi_1\\
\sigma: \lnot\psi_2\\
\vdots\\
\sigma: \lnot \psi_n\\
\sigma\alpha: \lnot\prop_1\\
\sigma\alpha: \lnot\prop_2\\
\vdots\\
\sigma\alpha: \lnot\prop_n\\
\sigma\alpha: \phi
\end{array}
\end{array}
\end{array}
\]
where we assume that the domain of $\alpha$ is $\set{\prop_1, \dots, \prop_n}$ and let $\alpha(\prop_i) = \psi_i$.

In spite of the apparent complexity of this tableau rule, we believe that the complexity of the method is not affected in the star-free fragment.
For the full language, we have to include a cut rule that ranges over all sub-formulas of the input formula $\phi_0$.
The reason is to permit the equality test to work also with all formulas $\psi_i$ that are included in the tableau when the new rule R$\lbox{\alpha}$ is applied.
Again, we believe that the complexity remains the same.
\paragraph{Other $\pdl$ Connectives.}
The integration of converse, complement, intersection and other $\pdl$ program connectives is also on our agenda.
For instance, we believe that we can apply techniques similar to the ones in \cite{Nguyen_Szalas-2009-KSE,Gore_Widmann-2010-AR,AbateGW09} for the converse.
In this case though, it is not clear whether complexity (or even decidability) results remain the same.
This is subject of future work.

%
%

\bibliographystyle{plain}
\bibliography{biblio}
\onlylong{%
\newpage
\appendix
\section{Rules in Numerator-Denominator Form}
For the comfort of the reader we present here the tableau rules in the more traditional numerator-denominator form.
\begin{table}
\begin{align*}
\begin{array}{cc}
(\text{R}\lbox{\alpha})
\quad
\begin{array}[t]{l}
\sigma: \lbox{\alpha}\phi\\
\hline
\sigma\alpha: \prop_1\\
\vdots\\
\sigma\alpha: \prop_n\\
\sigma\alpha: \lnot\prop_{n+1}\\
\vdots\\
\sigma\alpha: \lnot\prop_{n+m}\\
\sigma\alpha: \phi
\end{array}
&
\qquad\qquad
(\text{R}\ldia{\alpha})
\quad
\begin{array}[t]{l}
\sigma: \lnot\lbox{\alpha}\phi\\
\hline
\sigma\alpha: \prop_1\\
\vdots\\
\sigma\alpha: \prop_n\\
\sigma\alpha: \lnot\prop_{n+1}\\
\vdots\\
\sigma\alpha: \lnot\prop_{n+m}\\
\sigma\alpha: \lnot\phi
\end{array}
\\
\\
(\text{R}\lbox{?})
\quad
\begin{array}{c}
\sigma: \lbox{\psi?}\phi\\
\hline
\begin{array}{l|l}
\sigma: \lnot\psi
&
\sigma: \phi
\end{array}
\end{array}
&
\qquad\qquad
(\text{R}\ldia{?})
\quad
\begin{array}{l}
\sigma: \lnot\lbox{\psi?}\phi\\
\hline
\sigma: \psi\\
\sigma: \lnot\phi
\end{array}
\\
\\
(\text{R}\lbox{{;}})
\quad
\begin{array}{l}
\sigma: \lbox{\pi_1; \pi_2}\phi\\
\hline
\sigma: \lbox{\pi_1}\lbox{\pi_2}\phi
\end{array}
&
\qquad\qquad
(\text{R}\ldia{{;}})
\quad
\begin{array}{l}
\sigma: \lnot\lbox{\pi_1; \pi_2}\phi\\
\hline
\sigma: \lnot\lbox{\pi_1}\lbox{\pi_2}\phi
\end{array}
\\
\\
(\text{R}\lbox{\ndet})
\quad
\begin{array}{l}
\sigma: \lbox{\pi_1 \ndet \pi_2}\phi\\
\hline
\sigma: \lbox{\pi_1}\phi\\
\sigma: \lbox{\pi_2}\phi\\
\end{array}
&
\qquad\qquad
(\text{R}\ldia{\ndet})
\quad
\begin{array}{c}
\sigma: \lnot\lbox{\pi_1 \ndet \pi_2}\phi\\
\hline
\begin{array}{l|l}
\sigma: \lnot\lbox{\pi_1}\phi
&
\sigma: \lnot\lbox{\pi_2}\phi
\end{array}
\end{array}
\\
\\
(\text{R}\lbox{\ast})
\quad
\begin{array}{l}
\sigma: \lbox{\pi^\ast}\phi\\
\hline
\sigma: \phi\\
\sigma: \lbox{\pi}\lbox{\pi^\ast}\phi
\end{array}
&
\qquad\qquad
(\text{R}\ldia{\ast})
\quad
\begin{array}{c}
\sigma: \lnot\lbox{\pi^\ast}\phi\\
\hline
\begin{array}{l|l}
\sigma: \lnot\phi
&
\sigma: \lnot\lbox{\pi}\lbox{\pi^\ast}\phi
\end{array}
\end{array}
\end{array}
\end{align*}
\caption{%
\label{tab:rules}
Tableau rules for the operator $\lbox{~}$.
In R$\lbox{\alpha}$ and R$\ldia{\alpha}$, we assume that
$\dom(\alpha) = \set{\prop_1, \dots, \prop_n, \prop_{n+1}\dots, \prop_{n+m}}$ and also
$\alpha(\prop_1) = \dots = \alpha(\prop_n) = \top$, 
and
$\alpha(\prop_{n+1}) = \dots = \alpha(\prop_{n+m}) = \bot$.
}
\end{table}
\section{Proofs}

\begin{relemma}{\ref{lem:sat_preservation_1}}{Consistency Preservation}
For each tableau rule $\rho$, if branch $\branch$ is consistent, then the set of branches $\Branches$ generated by the application of $\rho$ to $\branch$ contains a consistent branch.
\end{relemma}

\begin{proof}
The proofs for the rules R$\lnot$, R$\land$ and R$\lor$ are easy and left to the reader.
For rule R$\lbox{\alpha}$, note that, because $\branch$ is consistent, we have $\Vals^\sigma \models \lbox{\alpha}\phi$. 
Then $\Vals^{\sigma\alpha} \models \phi$ by the truth condition for $\lbox{\alpha}$.
Moreover, by the definition of updates we have:
\begin{itemize}
\item 
$\Vals^{\sigma\alpha} \models \prop$ for all $\prop \in \dom(\alpha)$ such that $\alpha(\prop) = \top$, and 
\item 
$\Vals^{\sigma\alpha} \models \lnot\prop$ for all $\prop \in \dom(\alpha)$ such that $\alpha(\prop) = \bot$.
\end{itemize}
For the remaining tableau rules, namely
R$\ldia{\alpha}$,
R$\lbox{{;}}$, R$\ldia{{;}}$, R$\lbox{\ndet}$, and R$\ldia{\ndet}$,
RP1 and RP2,
the reasoning is similar and left to the reader. 
\hfill
\qed
\end{proof}

\begin{retheorem}{\ref{theo:soundness_1}}{Soundness}
If $\Vals \models \phi_0$ then there is no closed tableau for $(\Vals, \phi_0)$.
\end{retheorem}

\begin{proof}
Assume that $\Vals \models \phi_0$.
Then the initial tableau for $(\Vals, \phi_0)$ is consistent.
It follows from Lemma~\ref{lem:sat_preservation_1} that all tableaux for $\phi_0$ have at least one consistent branch $\branch$.
Now, towards a contradiction, assume that $\branch$ is closed.
Then $\branch$ contains both $\tuple{\sigma, \psi}$ and $\tuple{\sigma, \lnot\psi}$, for some $\sigma$ and $\psi$.
However, since $\branch$ is consistent, $\Vals^\sigma \models \psi$ and $\Vals^\sigma \models \lnot\psi$, which is a contradiction.
Therefore, $\branch$ is not closed neither is the tableau containing it.
\hfill
\qed
\end{proof}

\begin{retheorem}{\ref{theo:completeness_1}}{Completeness}
If there is no closed tableau for $(\Vals, \phi_0)$ then $\Vals \models \phi_0$.
\end{retheorem}
\medskip

\begin{proof}
Suppose there is no closed tableau for $(\Vals, \phi_0)$.
Let $\branch$ be an open and saturated branch of a tableau for $(\Vals, \phi_0)$.
We prove that, for every pair $\tuple{\sigma, \psi} \in \branch$, we have $\Vals^\sigma \models \psi$.
The proof is done by induction on $\len(\sigma) + \len(\psi)$ and, in particular, establishes that $\Vals \models \phi_0$, since $\tuple{\emptyseq, \phi_0} \in \branch$.
\medskip

Induction base:
We consider two cases:
\begin{itemize}
\item
Let $\sigma = \emptyseq$ and $\psi = \prop \in \Props$.
Then $\Vals \models \prop$, otherwise $\branch$ would be closed since $\branch_0 \subseteq \branch$.
\item
Let $\sigma = \emptyseq$ and $\psi = \lnot\prop$.
Then $\Vals \models \lnot\prop$, otherwise $\branch$ would be closed since $\branch_0 \subseteq \branch$.
\end{itemize}
\medskip

Induction Hypothesis:
For every $\tuple{\sigma, \psi} \in \branch$,
if $\len(\sigma) + \len(\psi) \leq n$,
then $\Vals^\sigma \models \psi$.
\medskip

Induction step:
Let $\len(\sigma) + \len(\psi) = n + 1$.
We only give some of all possible cases:
\begin{itemize}
\item
Let $\sigma = \sigma_1\alpha$ and $\psi = \prop \in \Props$.
We consider two sub-cases:
\begin{itemize}
\item
Let $\prop \not\in \dom(\alpha)$.
We have $\tuple{\sigma_1, \lnot\prop} \not\in \branch$,
otherwise $\branch$ would be closed, because it is saturated under RP2.
Then we have $\tuple{\sigma_1, \prop} \in \branch$,
because the branch is saturated under RP1 and $\branch_0 \subseteq \branch$.
By induction hypothesis, we have $\Vals^{\sigma_1} \models \prop$.
Since $\prop \not\in \dom(\alpha)$, we also have $\Vals^{\sigma_1\alpha} \models \prop$.
\item
Let $\prop \in \dom(\alpha)$.
We then must have $\alpha(\prop) = \top$:
otherwise $\branch$ would not only contain $\tuple{\sigma_1\alpha, \prop}$, 
but also $\tuple{\sigma_1\alpha, \lnot\prop}$ 
(by the application of rule R$\lbox{\alpha}$)
and $\branch$ would therefore be closed.
Hence, by the definition of updates $\prop \in \Vals^{\sigma_1\alpha}$.
The latter means that $\Vals^{\sigma_1\alpha} \models \prop$.
\end{itemize}
\item
Let $\sigma = \sigma_1\alpha$ and $\psi = \lnot\prop$.
Again, we consider two sub-cases:
\begin{itemize}
\item
Let $\prop \not\in \dom(\alpha)$.
We have $\tuple{\sigma_1, \prop} \not\in \branch$:
otherwise, $\branch$ would be closed, since it is saturated under RP1.
Then we have $\tuple{\sigma_1, \lnot\prop} \in \branch$, because the branch is saturated under RP2 and $\branch_0 \in \branch$.
Then $\prop \not\in \Vals^{\sigma_1}$ (by the induction hypothesis) and thus $\prop \not\in \Vals^{\sigma_1\alpha}$.
Then $\Vals^{\sigma_1\alpha} \models \lnot\prop$.
\item
Let $\prop \in \dom(\alpha)$.
Note that we have $\alpha(\prop) = \bot$:
otherwise $\branch$ would be closed, because it would contain $\tuple{\sigma_1\alpha, \prop}$ and $\tuple{\sigma_1\alpha, \lnot\prop}$, since it is saturated under R$\lbox{\alpha}$.
Then $\prop \not\in \Vals^{\sigma_1\alpha}$
(by its definition)
Then $\Vals^{\sigma_1\alpha} \models \lnot\prop$.
\end{itemize}
\item
Let $\psi = \lnot\lnot\psi_1$.
If $\tuple{\sigma, \lnot\lnot\psi_1} \in \branch$
then $\tuple{\sigma, \psi_1} \in \branch$
(because $\branch$ is saturated under R$\lnot$).
By Induction Hypothesis we have $\Vals^\sigma \models \psi_1$.
Therefore $\Vals^\sigma \models \lnot\lnot\psi_1$ by the truth condition for negation.
\item
Let $\psi = \psi_1 \land \psi_2$.
If $\tuple{\sigma, \psi_1 \land \psi_2} \in \branch$
then $\tuple{\sigma, \psi_1}, \tuple{\sigma, \psi_2} \in \branch$
(because $\branch$ is saturated under R$\land$).
By Induction Hypothesis we have 
$\Vals^\sigma \models \psi_1$ and $\Vals^\sigma \models \psi_2$.
Therefore $\Vals^\sigma \models \psi_1 \land \psi_2$
by the truth condition for conjunction.
\item
Let $\psi = \lnot(\psi_1 \land \psi_2)$.
If $\tuple{\sigma, \lnot(\psi_1 \land \psi_2)} \in \branch$
then $\tuple{\sigma, \lnot\psi_1} \in \branch$ or $\tuple{\sigma, \lnot\psi_2} \in \branch$
(because $\branch$ is saturated under R$\lor$).
By Induction Hypothesis we have 
$\Vals^\sigma \models \lnot\psi_1$ or $\Vals^\sigma \models \lnot\psi_2$.
Therefore $\Vals^\sigma \models \lnot(\psi_1 \land \psi_2)$
by the truth conditions for negation and conjunction.
\item
%
Let $\psi = \lbox{\alpha}\psi_1$.
If $\tuple{\sigma, \lbox{\alpha}\psi_1} \in \branch$
then $\tuple{\sigma\alpha, \psi_1} \in \branch$
(because $\branch$ is saturated under R$\lbox{\alpha}$).
Then, $\Vals^{\sigma\alpha} \models \psi_1$
(by Induction Hypothesis, because $\len(\sigma) + \len(\alpha) + \len(\psi_1) < \len(\sigma) + \len(\lbox{\alpha}\psi_1) = \len(\sigma) + 1 + \len(\alpha) + \len(\psi_1)$).
Therefore, $\Vals^\sigma \models \lbox{\alpha}\psi_1$
(by definition).
%
%
\item
Let $\psi = \lbox{\psi_1?}\psi_2$.
If $\tuple{\sigma, \lbox{\psi_1?}\psi_2} \in \branch$
then, because $\branch$ is saturated under rule R$\lbox{?}$, we consider two sub-cases.
Either
(1) $\tuple{\sigma, \lnot\psi_1} \in \branch$
or
(2) $\tuple{\sigma, \psi_2} \in \branch$.
In both sub-cases, we have $\Vals^\sigma \models \psi_1$ implies $\Vals^\sigma \models \psi_2$
(by Induction Hypothesis).
Therefore, $\Vals^\sigma \models \lbox{\psi_2?}\psi_2$
(by definition).
%
\item
Let $\psi = \lbox{\pi_1; \pi_2}\psi_1$.
If $\tuple{\sigma, \lbox{\pi_1; \pi_2}\psi_1} \in \branch$
then $\tuple{\sigma, \lbox{\pi_1}\lbox{\pi_2}\psi_2} \in \branch$
(because $\branch$ is saturated under rule R$\lbox{{;}}$).
Then it is easy to see that $\tuple{\sigma\sigma_1, \lbox{\pi_2}\psi_1} \in \branch$,
for all execution traces $\sigma_1 \in \seq(\pi_1)$
Then $\Vals^{\sigma\sigma_1} \models \lbox{\pi_2}\psi_1$
(by Induction Hypothesis,
since $\len(\sigma) + \len(\sigma_1) + 1 + \len(\pi_2) + \len(\psi_1) = \len(\sigma\sigma_1) + \len(\lbox{\pi_2}\psi_1) <
\len(\sigma) + \len(\lbox{\pi_1; \pi_2}\psi_1) = \len(\sigma) + 1 + \len(\pi_1) + 1 + \len(\pi_2) + \len(\psi_1)$).
The latter means that $\Vals^\sigma \models \lbox{\pi_1; \pi_2}\psi_1$.
\item
Let $\psi = \lbox{\pi_1 \ndet \pi_2}\psi_1$.
If $\tuple{\sigma, \lbox{\pi_1 \ndet \pi_2}\psi_1} \in \branch$
then $\tuple{\sigma, \lbox{\pi_1}\psi_1}, \tuple{\sigma, \lbox{\pi_2}\psi_1} \in \branch$
(because $\branch$ is saturated under rule R$\lbox{\ndet}$).
Then, $\Vals^\sigma \models \lbox{\pi_1}\psi_1$ and $\Vals^\sigma \models \lbox{\pi_2}\psi_1$
(by Induction Hypothesis).
Therefore, $\Vals^\sigma \models \lbox{\pi_1 \ndet \pi_2}\psi_1$
(by definition).
\item
The cases where $\psi = \lnot\lbox{\pi}\psi_1$ are analogous to the last ones.
\hfill
\qed
\end{itemize}
\end{proof}

\begin{retheorem}{\ref{theo:termination_1}}{Termination}
The algorithm in Table~\ref{tab:alg1} halts for every input $(\Vals, \phi_0)$.
\end{retheorem}

\begin{proof}
It is enough to show that function \Satsf\ is eventually called with an argument $\branch$ which is either a closed or a saturated branch.
Assume that, during the execution, branch $\branch$ is passed as argument to a call of function \Satsf.
Assume that $\branch$ contains a witness $\lambda$ to one of the tableau rules.
Then the function will be called recursively with a new branch $\branch_1$ wherein $\lambda$ is marked `non-applicable', so it will never be a witness again.
Moreover, $\branch_1$ differs from $\branch$ by some additional labelled formulas that are shorter than $\lambda$.
Therefore, by an easy induction on the length of labelled formulas, we show that function \Satsf\ will eventually generate a branch $\branch_1$ which is either closed or saturated.
The details are omitted.
\hfill
\qed
\end{proof}

\begin{retheorem}{\ref{theo:complexity_1}}{Complexity}
The amount of memory used by the algorithm in Table~\ref{tab:alg1} is a polynomial function of the length of the input $(\Vals, \phi_0)$.
\end{retheorem}

\begin{proof}
Each call of function \Satsf\ generates a new tableau branch.
This branch remains in memory during the recursive calls and is released once the present call of the function finishes its execution returning \True\ or \False.
Therefore, to prove our claim, it is enough to show that the amount of memory used by each tableau branch is a polynomial function of $\len(\phi_0)$ and that the number of successive recursive calls to \Satsf\ is a polynomial function of $\len(\phi_0)$ as well.

First, we observe that the initial branch $\branch_0$ contains only formulas from $\ecl(\phi_0)$.

Second, each time \Satsf\ is called with branch $\branch$ as argument, all the labelled formulas in $\branch$ have the same label.
Since the amount of memory used by a branch is bounded by the number of different labelled formulas it contains,
it then follows from Lemma~\ref{lem:cl} that the number of different labelled formulas in $\branch$ is bounded by $2\len(\phi_0)$.

Third, the number of successive recursive calls during the local saturation of the tableau is bounded by the number of different labelled formulas a successor may contain.
This number is $2\len(\phi_0)$, again by Lemma~\ref{lem:cl}.
Now, recall that the list of successors created by the algorithm during successive recursive calls of \Satsf\ corresponds to one execution trace from input formula $\phi_0$.
The length of each execution trace is bounded by $\len(\phi_0)$, by Lemma~\ref{lem:exe}.
Then the number of successive recursive calls that create new successors is bounded by $\len(\phi_0)$.
Finally, the total number of successive recursive calls to \Satsf\ is bounded by $2\len(\phi_0)^2$.

We then conclude that the amount of memory used by the algorithm is proportional to $4\len(\phi_0)^3$.
\hfill
\qed
\end{proof}

\begin{relemma}{\ref{lem:sat_preservation_2}}{Satisfiability Preservation}
For each tableau rule $\rho$, if branch $\branch$ is consistent, then the set of branches $\Branches$ generated by the application of $\rho$ to $\branch$ contains a consistent branch.
\end{relemma}

\begin{proof}
For the rules that are already part of in the method for star-free \dlpa
the proof is the same as in the proof of Lemma~\ref{lem:sat_preservation_1}.
For the other cases, we have:
\begin{itemize}
\item
Rule R$\lbox{\ast}$:
If $\Vals^\sigma \models \lbox{\pi^\ast}\phi$ then, by Proposition~\ref{pro:principles}, 
$\Vals^\sigma \models \phi$ and $\Vals^\sigma \models \lbox{\pi}\lbox{\pi^\ast}\phi$.
\item
Rule R$\ldia{\ast}$:
If $\Vals^\sigma \models \lnot\lbox{\pi^\ast}\phi$
then $\Vals^\sigma \not\models \lbox{\pi^\ast}\phi$, and the latter is the case
iff
$\Vals^\sigma \models \lnot\phi$ or $\Vals^\sigma \models \lnot\lbox{\pi}\lbox{\pi^\ast}\phi$,
again due to Proposition~\ref{pro:principles}.
\hfill
\qed
\end{itemize}
\end{proof}

\begin{retheorem}{\ref{theo:soundness_2}}{Soundness}
If $\Vals \models \phi_0$ then there is no closed tableau for $(\Vals, \phi_0)$.
\end{retheorem}

\begin{proof}
Assume that $\Vals \models \phi_0$.
Then, the initial tableau for $(\Vals, \phi_0)$ is consistent.
It follows from Lemma~\ref{lem:sat_preservation_2} that all tableaux for $(\Vals, \phi_0)$ have at least one consistent branch $\branch$.
Now, towards a contradiction, assume that $\branch$ is closed.
Then, either
(1) $ \branch$ contains both $\tuple{\sigma, \phi}$ and $\tuple{\sigma, \lnot\phi}$, 
for some $\sigma$ and $\phi$;
or
(2) $\branch$ is saturated and contains an unfulfilled eventuality 
$\tuple{\sigma, \lnot\lbox{\pi^\ast}\phi}$.
In the first case, (because $\branch$ is consistent) $\Vals^\sigma \models \phi$ and $\Vals^\sigma \models \lnot\phi$, which is a contradiction.
In the second case, (again because $\branch$ is consistent) $\Vals^\sigma \models \lnot\lbox{\pi^\ast}\phi$.
Moreover,
$\branch$ contains $\tuple{\sigma\sigma', \phi}$, for all execution traces $\sigma' \in \seq(\pi^\ast)$, by the saturation of R$\ldia{\ast}$ and because the eventuality is not fulfilled.
Then, $\Vals^{\sigma\sigma'} \models \phi$, for all execution traces $\sigma' \in \seq(\pi^\ast)$
(because the branch is consistent).
Then, $\Vals^\sigma \models \lbox{\pi^n}\phi$, for all $n \geq 0$.
The latter implies $\Vals^\sigma \models \lbox{\pi^\ast}\phi$, which contradicts the hypothesis.
So $\branch$ is not closed, and therefore the tableau containing $\branch$ cannot be closed.
\hfill\qed
\end{proof}

\begin{retheorem}{\ref{theo:completeness_2}}{Completeness}
If there is no closed tableau for $(\Vals, \phi_0)$ then $\Vals \models \phi_0$.
\end{retheorem}

\begin{proof}
The proof is essentially the same as for Theorem~\ref{theo:completeness_1}.
We only add the induction step case for the Kleene star operator here:
\begin{itemize}
\item[-]
Let $\psi = \lbox{\pi^\ast}\psi_1$.
If $\tuple{\sigma, \lbox{\pi^\ast}\psi_1} \in \branch$ then $\tuple{\sigma\sigma', \psi_1} \in \branch$,
for all execution traces $\sigma' \in \seq(\pi^\ast)$
(because $\branch$ is saturated, in particular, under rule R$\lbox{\ast}$).
Then, $\Vals^{\sigma\sigma'} \models \psi$, for all execution traces $\sigma' \in \seq(\pi^\ast)$
(by Induction Hypothesis),
iff
$\Vals^\sigma \models \lbox{\pi^n}\psi$, for all $n \in \Nats_0$,
iff
$\Vals^\sigma \models \lbox{\pi^\ast}\psi$.
\end{itemize}
For the case where $\psi = \lnot\lbox{\pi^\ast}\psi$ we use the fact that the branch $\branch$ is not closed, which means that the eventuality is fulfilled in $\branch$, by definition.
\hfill\qed
\end{proof}

\begin{retheorem}{\ref{theo:termination_2}}{Termination}
The algorithm in Table~\ref{tab:alg2} halts for every input $(\Vals, \phi_0)$.
\end{retheorem}

\begin{proof}
It is enough to show that the algorithm eventually generates a tableau such that all its branches are either closed or saturated.
The algorithm has two parts: local saturation and successor creation.

First, assume that the latest generated tableau $\Tableau$ contains an open and unsaturated branch $\branch$ with a witness $\lambda$ to one of the tableau rules of the local saturation part. 
Then the algorithm updates $\Tableau$ by marking $\lambda$ as `non-applicable', so it will never be a witness again.
Moreover, the new branches of the updated tableau $\Tableau$ differ from the old ones by somme additional labelled formulas that are either shorter than $\lambda$ or (in the case of rules R$\lbox{\ast}$ and R$\ldia{\ast}$) that can no longer be witnesses to these rules any more.
Therefore, by an easy induction on the length of labelled formulas, we show that the algorithm will eventually generate a tableau such that all its branches are either closed or saturated for these rules.
The details are omitted.

Second, assume that the latest generated tableau $\Tableau$ contains an open and unsaturated branch $\branch$ with a witness $\lambda$ to one of the tableau rules R$\lbox{\alpha}$ and R$\ldia{\alpha}$.
Then it marks $\lambda$ as `non-applicable' and the updated $\Tableau$ contains new branches with somme additional labelled formulas $\tuple{\sigma, \psi}$, where $\sigma$ is a new label and $\psi \in \ecl(\phi_0)$.
Since $\ecl(\phi_0)$ is finite, there cannot be an infinite number of different labelled formulas whose labels different from another label of the tableau.
Thus, the equality test will eventually succeeds and new successors won't be created indefinitely.
\hfill
\qed
\end{proof}

\begin{retheorem}{\ref{theo:complexity_2}}{Complexity}
The amount of time used by the algorithm in Table~\ref{tab:alg2} is an exponential function of the length of the input $(\Vals, \phi_0)$.
\end{retheorem}

\begin{proof}
The amount of time used by the algorithm in Table~\ref{tab:alg2} is bounded by the number of rule applications during the execution and the time spent on the equality tests.

First, for each successor, the local saturation part performs at most $2\len(\phi_0)$ rule applications, because it is the maximum size of $\ecl(\phi_0)$ (by Lemma~\ref{lem:cl}).
Second, there can be at most $2^{2\len(\phi_0)}$ different labels in the entire tableau $\Tableau$, because it is the maximum size of $\mathcal{P}(\ecl(\phi_0))$.
Then the successor creation part can generate at most $2^{2\len(\phi_0)}$ different labels until the equality test succeeds.
Moreover, the equality test itself takes time proportional to $2^{2\len(\phi_0)}$, by the same reasons.

Overall, the amount of time used by the algorithm is bounded by $2\len(\phi_0) \times 2^{2\len(\phi_0)} \times 2^{2\len(\phi_0)}$ equals to $2^{4\len(\phi_0)+1}\len(\phi_0)$, which is an exponential function of the length of the input formula $\phi_0$.
\hfill
\qed
\end{proof}
}
\end{document}